  \providecommand\BibTeX{{%
    \normalfont B\kern-0.5em{\scshape i\kern-0.25em b}\kern-0.8em\TeX}}}
\pgfplotsset{compat=1.15}
\newcommand{\Z}{\mathbb{Z}}
\newcommand{\R}{\mathbb{R}}
\newcommand{\K}{\mathbb{K}}
\newcommand{\var}{\mathop{\mathtt{var}}}
\newcommand{\res}{\mathop{\mathtt{res}}}
\newcommand{\disc}{\mathop{\mathtt{dis}}}
\newcommand{\degree}{\mathop{\mathtt{deg}}}
\newcommand{\proj}{\mathtt{Proj}}
\newcommand{\projmc}{\proj_{\text{mc}}}
\newcommand{\projbrown}{\proj_{\text{br}}}
\newcommand{\coeff}{\mathop{\mathtt{coeff}}}
\newcommand{\lc}{\mathop{\mathtt{lc}}}
\newcommand{\cont}{\mathop{\mathtt{cont}}}
\newcommand{\child}{\mathop{\mathtt{child}}}
\newtheorem{remark}{Remark}
\begin{document}

\title{Choosing the Variable Ordering for Cylindrical Algebraic Decomposition via Exploiting Chordal Structure}
\author{Haokun Li, Bican Xia, Huiying Zhang, and Tao Zheng}
\email{haokunli, huiyingz, 1601110051@pku.edu.cn} 
\email{xbc@math.pku.edu.cn}
\affiliation{%
  \institution{School of Mathematical Sciences, Peking University}
  \state{Beijing}
  \country{China}}






\begin{abstract}


Cylindrical algebraic decomposition (CAD) plays an important role in the field of real algebraic geometry and many other areas. As is well-known, the choice of variable ordering while computing CAD has a great effect on the time and memory use of the computation as well as the number of sample points computed. In this paper, we indicate that typical CAD algorithms, if executed with respect to a special kind of variable orderings (called ``the perfect elimination orderings''), naturally preserve chordality, which is an important property on sparsity of variables. Experimentation suggests that if the associated graph of the polynomial system in question is chordal (\emph{resp.}, is nearly chordal), then a perfect elimination ordering of the associated graph (\emph{resp.}, of a minimal chordal completion of the associated graph) can be a good variable ordering for the CAD computation. That is, by using the perfect elimination orderings, the CAD computation may produce a much smaller full set of projection polynomials than by using other naive variable orderings. More importantly, for the complexity analysis of the CAD computation via a perfect elimination ordering, an $(m,d)$-property of the full set of projection polynomials obtained via such an ordering is given, through which the ``size'' of this set is characterized. This property indicates that when the corresponding perfect elimination tree has a lower height, the full set of projection polynomials also tends to have a smaller ``size''. This is well consistent with the experimental results, hence the perfect elimination orderings with lower elimination tree height are further recommended to be used in the CAD projection.
\end{abstract}

\keywords{Cylindrical algebraic decomposition, chordality, variable ordering, polynomial.}
\maketitle

\section{Introduction}

Cylindrical algebraic decomposition (CAD) has been widely used in real algebraic geometry and beyond 
since it was introduced by Collins in 1975 \cite{DBLP:conf/automata/Collins75}. A CAD of an Euclidean space $\mathbb{R}^n$ for a given polynomial set $P\subset\mathbb{Z}[x_1,\ldots,x_n]$ is a cylindrical partition of the space $\mathbb{R}^n$ so that every polynomial in the set $P$ is sign-invariant in each component (called ``cell'') of this partition. There are many CAD algorithms and their variants, see for example   \cite{DBLP:conf/automata/Collins75,DBLP:journals/jsc/McCallum88,hong1990improvements,Hong,collins1991partial,McCallum2,DBLP:journals/jsc/Brown01a,triCAD,hong2012variant,brown2015open,han2016proving,strzebonski2016cylindrical,DBLP:journals/jsc/BradfordDEMW16,han2017open}. 
A typical CAD algorithm usually contains two parts: the \emph{projection} part and the \emph{lifting} part. In the projection part, one eliminates the variables of the polynomial set $P$ successively in some order. Then the sample points obtained in some one-dimensional space via the full set of projection polynomials are lifted step-by-step to sample points in the original ($n$-dimensional) space during the lifting part and each sample point in $\mathbb{R}^n$ represents a cell of the CAD. 

It is long known that the variable ordering used in the projection part has a great effect on the time and memory use of the CAD computation as well as the number of sample points computed. There are many researches focusing on the problem of choosing a good variable ordering for the CAD computation. For instance, in \cite{brown2004companion,dolzmann2004efficient,bradford2013optimising}, various heuristics are provided to suggest a relatively good variable ordering based on different rules. Latter in \cite{huang2014applying,england2019comparing,chen2020variable}, the methods based on machine learning and artificial neural network are used to choose a good variable ordering for CAD computation. 

The present paper is inspired by the previous work \cite{cifuentes2017chordal,mou2019chordal,mou2019chordality,chen2020chordality} on combining the \emph{chordal structure} with triangular decomposition of polynomial sets. In \cite{cifuentes2017chordal}, based on the computation of triangular decomposition, Cifuentes and Parrilo compute the chordal network of a polynomial set by exploiting the chordal structure of its associated graph. Latter in \cite{mou2019chordal}, Mou \emph{et al} indicate that Wang's algorithm and a subresultant-based algorithm, both in top-down style, for computing the triangular decomposition of a polynomial set preserve the chordal structure. And another subresultant-based  algorithm for computing the regular decomposition in the same style is also proved to preserve the chordal structure. Then in \cite{mou2019chordality} by Mou and Lai, it is proved that Wang's algorithm for computing the simple decomposition of a polynomial set in top-down style preserves the chordal structure of the polynomial set as well. Recently in \cite{chen2020chordality}, Chen proves that an incremental algorithm for computing the triangular decomposition of a polynomial set preserves the chordal structure, too. On the contrary, Cifuentes and Parrilo indicate that the computation of Gr{\"o}bner basis of a polynomial set seems to violate the chordal structure \cite{cifuentes2016exploiting}.

In this paper we take advantage of the chordal structure of the associated graph of a polynomial set while computing CAD for this set to exploit its variable sparsity. We first indicate that all the basic operations usually used in the CAD projection algorithms, such as resultant, discriminant, subresultant, \emph{etc}, preserve chordal structures. Based on this, it is an easy corollary that typical CAD algorithms naturally preserve chordal structures (if executed with respect to a perfect elimination ordering), thus also the variable sparsity pattern embedded in the chordality. Some experimental results show that a perfect elimination ordering of a chordal structure of the polynomial set in question may be a good variable ordering for computing CAD compared with other naive ones. More importantly, for the complexity analysis of the CAD computation via a perfect elimination ordering, an $(m,d)$-property for the full set of projection polynomials obtained by using such an ordering is provided, through which the ``size'' of this set is characterized. According to that property, when the corresponding perfect elimination tree has a lower height, the full set of projection polynomials tends to have a smaller “size”. This is further supported by some experimental results. And because of that, the perfect elimination orderings with lower elimination  tree  height  are  further  recommended while computing the  CAD projection.

The rest of this paper is organized as follows: Section \ref{section:pre} includes some basic definitions. In Sections \ref{section:bas} and \ref{section:cad}, we prove that typical CAD algorithms preserve the chordal structure of the associated graph of the polynomial set in question. In Section \ref{section:complexity}, the ``size'' of the full set of projection polynomials obtained in accordance with a perfect elimination ordering is characterized via an $(m,d)$-property, while  Section \ref{section:exp} is devoted to showing the effectiveness of our approach by some examples from applications. Section \ref{sec:conclusion} concludes the paper.

\section{Preliminaries}\label{section:pre}

Denote by $\K[\bar{x}]=\K[x_1,\ldots,x_n]$ the polynomial ring in variables $x_1,\ldots,x_n$ with coefficients in a domain $\K$. For any $x_k$, a nonzero polynomial $f\in\K[\bar{x}]$ can be written in the form $f=\sum_{i=0}^{s}a_{i}x_{k}^{i}$ with $ a_{s}\not\equiv 0,a_{s-1},\ldots,a_0\in\K[x_1,\ldots,x_{k-1},x_{k+1},\ldots,x_{n}]$ and $s\in\Z_{\geq0}$. The number $s$ is called the\emph{ degree} of $f$ with respect to (\emph{w.r.t.}) the variable $x_{k}$ and is denoted by $\degree(f,x_{k})$, while those $a_i$'s are called the {\em coefficients} of $f$ \emph{w.r.t.} $x_k$ and they form a set $\coeff(f,x_k)=\{a_s,a_{s-1},\dots,a_0\}$. In particular, we call $a_{s}$ the {\em leading coefficient} of $f$ \emph{w.r.t.} $x_k$ and denote it by $\lc(f,x_k)$. The {\em content} of $f$ \emph{w.r.t.} $x_k$, denoted by $\cont(f,x_k)$, refers to the GCD of the coefficients $a_{0},a_{1},\ldots a_{s}$. The polynomial $f$ is called {\em primitive} \emph{w.r.t.} $x_k$ if $\cont(f,x_k)=1$ while the polynomial $f/\cont(f,x_k)$ is defined to be \emph{the primitive part} of $f$ \emph{w.r.t.} the variable $x_k$. In addition, we denote by $\cont(F,x_k)$ the set of those non-constant contents of the elements of a polynomial set $F$.





In the following, we recall the definition of the associated graph of a polynomial set and the concept of chordal graph. Let $f$ be a polynomial  in  $\K[\bar{x}]$ and denote by $\var(f)$ the set of variables  which appear in $f$. For a polynomial set $F$, we define $\var(F) =\cup_{f\in F}\;\var(f)$. Then we have the following definition:

\begin{definition}(\cite{mou2019chordal}, Def.1)
Let $F\subset\K[\bar{x}]$ be a finite polynomial set. The undirected graph with the vertex set $\var(F)$ and the edge set $$\{(x_i,x_j)\,|\;\exists f\in F, x_i,x_j\in \var(f)\}$$ is called \emph{the associated graph} of $F$ and is denoted by $\mathscr{G}(F)$.
\end{definition}
\begin{remark}
If $\,\var(F)=\emptyset$, then $\mathscr{G}(F)$ is the empty graph.
\end{remark}

\begin{example}\label{ex:associategraph}
Set $F_{1}=\{ x_{1}x_{3}-1, x_{1}x_{2}-1, x_{2}x_{3}-1 \}$ and $F_{2}=\{ y_{1}^{4}-1, y_{1}^{2}+y_{3}, y_{2}^{2}+y_{3}, y_{3}^{2}+y_{4} \}$. Then the associated graphs $\mathscr{G}(F_1)$ and $\mathscr{G}(F_2)$ of them are shown in Fig. \ref{fig:associategraph1} and Fig. \ref{fig:associategraph2}, respectively.
\vspace{-3mm}
\begin{figure}[htbp]
\centering
\subfigure{
\begin{minipage}[htbp]{0.4\linewidth}
\centering
\begin{tikzpicture}[scale = 0.4,line cap=round,line join=round,>=triangle 45,x=1cm,y=1cm]
\clip(0.9,-0.1) rectangle (7.1,6.1);
\draw [line width=1pt] (2,1) circle (1cm);
\draw [line width=1pt] (2,5) circle (1cm);
\draw [line width=1pt] (6,1) circle (1cm);
\draw [line width=1pt] (2,4)-- (2,2);
\draw [line width=1pt] (3,1)-- (5,1);
\draw [line width=1pt] (2.7355160128001135,4.322492660619368)-- (5.378178567264243,1.7831590552310879);
\begin{scriptsize}
\draw[color=black] (2.0777036305893634,1.0) node[scale=1.5]{$x_1$};
\draw[color=black] (2.0777036305893634,5.0) node[scale=1.5] {$x_2$};
\draw[color=black] (6.067358456306757,1.0) node[scale=1.5] {$x_3$};
\end{scriptsize}
\end{tikzpicture}
\caption{$\;\;\mathscr{G}(F_{1})$}
\label{fig:associategraph1}
\end{minipage}
}
\subfigure{
\begin{minipage}[htbp]{0.4\linewidth}
\centering
\begin{tikzpicture}[scale = 0.4,line cap=round,line join=round,>=triangle 45,x=1cm,y=1cm]
\clip(4.9,-0.1) rectangle (15.1,6.1);
\draw [line width=1pt] (6,5) circle (1cm);
\draw [line width=1pt] (10,5) circle (1cm);
\draw [line width=1pt] (10,1) circle (1cm);
\draw [line width=1pt] (14,5) circle (1cm);
\draw [line width=1pt] (7,5)-- (9,5);
\draw [line width=1pt] (9.999501243850824,1.9999998756211441)-- (10,4);
\draw [line width=1pt] (11,5)-- (13,5);
\begin{scriptsize}
\draw[color=black] (5.967358456306757,5.1) node[scale=1.5] {$y_1$};
\draw[color=black] (9.970356609200462,5.1) node[scale=1.5] {$y_3$};
\draw[color=black] (9.957013282024149,1.1) node[scale=1.5] {$y_2$};
\draw[color=black] (14.1,5.1) node[scale=1.5] {$y_4$};
\end{scriptsize}
\end{tikzpicture}
\caption{$\;\;\mathscr{G}(F_{2})$}
\label{fig:associategraph2}
\end{minipage}
}
\end{figure}
\end{example}

\vspace{-6.5mm}
\begin{definition}(\cite{Blair1993})
Let $G=(V,E)$ be a graph with $V=\{x_{1},\ldots ,x_{n}\}$. An ordering of the vertices $x_{l_{1}}>x_{l_{2}}>\cdots >x_{l_{n}}$ is called \emph{a perfect elimination ordering} if for each vertex $x_{l}$, the induced subgraph with the vertex set
$$X_{l}=\{x_{l}\}\cup \{x_{k}\;|\;x_{k}<x_{l},\;(x_{k},x_{l})\in E\}$$
is a clique. A graph $G$ with a perfect elimination ordering is said to be \emph{chordal}. 
\end{definition}

The graphs $\mathscr{G}(F_1)$ and $\mathscr{G}(F_2)$ in Fig. \ref{fig:associategraph1} and Fig. \ref{fig:associategraph2} are both chordal graphs. In the graph $\mathscr{G}(F_{1})$, any vertex ordering is a perfect elimination ordering. The graph $\mathscr{G}(F_{2})$ is a tree, we know that trees are chordal with perfect elimination orderings obtained by randomly pruning the leafs. 

\begin{definition}(\cite{DBLP:journals/algorithmica/BerryBHP04})
Let $G=(V,E)$ be a graph, \emph{a chordal completion} of $G$ is a chordal graph $\hat{G}=(\hat{V},\hat{E})$ such that $G$ is a subgraph of $\hat{G}$, {\it i.e.} $V \subset \hat{V}$ and $ E\subset\hat{E}$. We call $\hat{G}$ a minimal chordal completion of $G$, if any subgraph of $\hat{G}$ is not a  chordal completion of $G$. 
\end{definition}




For a given graph, there are some known algorithms, {\it e.g.} the {\em elimination game} algorithm in \cite{doi:10.1137/1003021}, to find one of its chordal completions and a corresponding perfect elimination ordering of that chordal completion. 

\begin{definition}
Given a polynomial set $F\subset \K[\bar{x}]$, a graph $G$ is \emph{a chordal structure} of $F$ if $G$ is a chordal completion of $\mathscr{G}(F)$. In particular, if $\mathscr{G}(F)$ is a chordal structure of $F$, then $F$ is said to be \emph{chordal}. For convenience, any chordal graph is regarded as a chordal structure of $\mathscr{G}(F)$ if  $\var(F)=\emptyset$.
\end{definition}
\section{Basic Operations Preserving Chordal Structure}\label{section:bas}
The propositions presented in this section show that some basic polynomial operations, such as discriminant, resultant and subresultant, preserve the chordal structure. Then it is reasonable that any algorithm, containing only these operations, also preserves the chordal structure. Now that the chordal structure is preserved by the algorithm, the variable sparsity embedded in the chordal structure is preserved as well. Later we will see the advantages of preserving such a sparsity pattern during the computation of CAD.

For any $f,g\in\K[\bar{y},x]=\K[y_1,\ldots,y_m,x]$, we denote by $\disc(f,x)$ the discriminant of $f$ \emph{w.r.t.} $x$, and set $\res(f,g,x)$ to be the Sylvester resultant, $S_j$ to be the $j$-th subresultant and $S_{\mu+1},S_{\mu},S_{\mu-1},\cdots,S_{0}$ to be the subresultant chain of $f$ and $g$ \emph{w.r.t.} $x$, respectively. Then we have
\begin{proposition}\label{prop:coeff}
Let $f$ be a polynomial in $\K[\bar{y},x]$ and $G=(V,E)$ a chordal structure of the set $\{f\}$, then $G$ is a chordal structure of the set $\mathfrak{S}=\coeff(f,x)\cup\{\cont(f,x),\disc(f,x)\}$. In particular, $G$ is a chordal structure of the set $\{\lc(f,x)\}$.
\end{proposition}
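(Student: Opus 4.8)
The plan is to reduce the statement to a single variable-containment observation and then exploit the completeness of the clique on $\var(f)$. Since $G$ is by hypothesis a chordal completion of $\mathscr{G}(\{f\})$, it is already chordal; hence to prove that $G$ is a chordal structure of $\mathfrak{S}$ it suffices to show that $\mathscr{G}(\mathfrak{S})$ is a subgraph of $G$, i.e.\ that $\var(\mathfrak{S})\subseteq V$ and $E(\mathscr{G}(\mathfrak{S}))\subseteq E$. Both will follow at once from the claim that every polynomial $h\in\mathfrak{S}$ satisfies $\var(h)\subseteq\var(f)\setminus\{x\}$.

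First I would record the elementary graph-theoretic consequence of this claim. In $\mathscr{G}(\{f\})$ the whole set $\var(f)$ is a clique, because any two variables occurring in the single polynomial $f$ are joined by an edge; and since $G$ is a completion of $\mathscr{G}(\{f\})$, this clique sits inside $G$. Consequently, once I know $\var(h)\subseteq\var(f)$ for each $h\in\mathfrak{S}$, any edge $(y_i,y_j)$ of $\mathscr{G}(\mathfrak{S})$ --- arising from two variables both lying in some $\var(h)\subseteq\var(f)$ --- is already an edge of $\mathscr{G}(\{f\})$, hence of $G$. Thus $\mathscr{G}(\mathfrak{S})$ embeds into $G$ and the proposition reduces entirely to the variable bounds.

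Next I would verify the variable bounds term by term. Writing $f=\sum_{i=0}^{s}a_i x^i$ with $a_i\in\K[\bar y]$, each coefficient $a_i$ visibly involves no $x$ and only variables occurring in $f$, so $\var(a_i)\subseteq\var(f)\setminus\{x\}$; in particular this covers $\coeff(f,x)$ and its member $\lc(f,x)=a_s$. For $\cont(f,x)=\gcd(a_0,\ldots,a_s)$ I would use that the content divides each $a_i$: over a domain, if a variable appears in a divisor it appears in the dividend, so $\var(\cont(f,x))\subseteq\bigcap_i\var(a_i)\subseteq\var(f)\setminus\{x\}$. For the discriminant I would invoke its expression $\disc(f,x)=\pm\,a_s^{-1}\res(f,\partial f/\partial x,x)$: the resultant is a polynomial in the coefficients of $f$ and of $\partial f/\partial x$, all of which lie in $\K[\bar y]$ and involve only variables of $f$, and dividing by $a_s$ keeps the result a polynomial in $\K[\bar y]$, whence $\var(\disc(f,x))\subseteq\var(f)\setminus\{x\}$.

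The main obstacle is the discriminant case, since unlike the coefficients and the content it is not manifestly built from $\var(f)\setminus\{x\}$; the point to be careful about is that the resultant of $f$ and $\partial f/\partial x$ is a polynomial expression in their $x$-coefficients (a Sylvester determinant), so no new variable --- and in particular not $x$ --- can be introduced. The ``in particular'' claim for $\{\lc(f,x)\}$ is then immediate, because $\lc(f,x)\in\coeff(f,x)\subseteq\mathfrak{S}$ forces $\mathscr{G}(\{\lc(f,x)\})$ to be a subgraph of $\mathscr{G}(\mathfrak{S})$, hence of $G$.
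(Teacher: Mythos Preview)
Your argument is correct and follows the same route as the paper's proof: both reduce to the observation that $\var(\mathfrak{S})\subseteq\var(f)$, whence $\mathscr{G}(\mathfrak{S})$ is a subgraph of the clique $\mathscr{G}(\{f\})$ and therefore of $G$. The paper simply asserts $\var(\mathfrak{S})\subset\var(f)$ in one line, whereas you spell out the verification for each kind of element of $\mathfrak{S}$; the extra detail is sound but not required.
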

\begin{proof}
Since $\var(\mathfrak{S})\subset\var(f)$, $\mathscr{G}(\mathfrak{S})$ is a subgraph of $\mathscr{G}(\{f\})$, which is a subgraph of $G$.
\end{proof}
\vspace{-2mm}
\begin{proposition}\label{prop:res}
Let $f,g$ be polynomials in $\K[\bar{y},x]$ and $G=(V,E)$ a chordal structure of the set $\{f,g\}$. If there is a perfect elimination ordering of $G$ such that $x>y_i$ for any $y_i\in\bar{y}$, then $G$ is a chordal structure of the set $\mathfrak{S}=\{\res(f,g,x), S_{\mu +1},S_{\mu},S_{\mu-1},\cdots,S_{0}\}$.
\end{proposition}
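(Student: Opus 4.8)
The plan is to reduce the statement to graph theory. Since $G$ is a chordal structure of $\{f,g\}$, the graph $G$ is already chordal, so it remains only to verify that $\mathscr{G}(\mathfrak{S})$ is a subgraph of $G$; equivalently, that every edge $(y_i,y_j)$ of $\mathscr{G}(\mathfrak{S})$ belongs to $E$.

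The first step is to pin down $\var(\mathfrak{S})$. By the determinantal formula for subresultants, each coefficient (w.r.t. $x$) of $\res(f,g,x)$ and of every $S_j$ is, up to sign, a minor of the Sylvester matrix of $f$ and $g$, whose entries lie in $\coeff(f,x)\cup\coeff(g,x)$. Hence the coefficients w.r.t. $x$ of the elements of $\mathfrak{S}$ have variables contained in $\var\!\big(\coeff(f,x)\cup\coeff(g,x)\big)\subseteq(\var(f)\cup\var(g))\setminus\{x\}$; allowing for the fact that a subresultant polynomial $S_j$ may itself contain $x$, we obtain $\var(\mathfrak{S})\subseteq\var(f)\cup\var(g)$. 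This is the $\{f,g\}$-analogue of Proposition \ref{prop:coeff}.

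Next I would bring in the ordering hypothesis. Since $x>y_i$ for all $i$, the vertex $x$ is maximal in the given perfect elimination ordering, so by definition the induced subgraph on $X_x=\{x\}\cup\{y\mid (y,x)\in E\}$ is a clique. Moreover, whenever $x\in\var(f)$, every other variable $y\in\var(f)$ satisfies $(y,x)\in\mathscr{G}(\{f\})\subseteq G$, so $\var(f)\subseteq X_x$; likewise $\var(g)\subseteq X_x$ when $x\in\var(g)$. Therefore, in the principal case $x\in\var(f)\cap\var(g)$, the inclusion from the previous step gives $\var(\mathfrak{S})\subseteq X_x$, a clique of $G$, so any two variables co-occurring in an element of $\mathfrak{S}$ are already adjacent in $G$; in particular every edge of $\mathscr{G}(\mathfrak{S})$ lies in $E$, as desired.

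The conceptual heart of the argument, and the only place chordality is essential, is the ``cross'' edge $(y_i,y_j)$ with $y_i\in\var(f)\setminus\var(g)$ and $y_j\in\var(g)\setminus\var(f)$ (together with edges incident to $x$): such a pair need not occur together in $f$ or in $g$, so the edge is typically absent from $\mathscr{G}(\{f,g\})$ and is created only through the elimination of $x$, whereas the clique property of $X_x$ forces it into $E$. Edges with both endpoints in $\var(f)$ (or both in $\var(g)$) are immediate, since they already belong to $\mathscr{G}(\{f\})\subseteq G$ (resp. $\mathscr{G}(\{g\})\subseteq G$). I expect the main obstacle to be not this clean argument but the degenerate cases in which $x\notin\var(f)$ or $x\notin\var(g)$ (or a leading coefficient vanishes), where the resultant and subresultant chain collapse; there one must check separately that $\var(\mathfrak{S})$ remains confined to $\var(f)$ or to $\var(g)$ alone, so that no genuine cross edge arises and the claim reduces to the easy case.
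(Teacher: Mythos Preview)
Your argument is correct and follows essentially the same route as the paper's proof: show $\var(\mathfrak{S})\subseteq\var(\{f,g\})\subseteq X_x$, observe that $X_x$ is a clique by the perfect elimination property, and conclude that $\mathscr{G}(\mathfrak{S})$ is a subgraph of $G$. Your elaboration on the ``cross'' edges and on the degenerate cases $x\notin\var(f)$ or $x\notin\var(g)$ goes beyond what the paper spells out (the paper's proof is a two-line version that silently assumes the principal case), but the core idea is identical.
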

\begin{proof}

Since $x>y_i$ for any $y_i\in \bar{y}$, $\var(\mathfrak
{S})\subset \var(\{f,g\})\subset X_x$ with $X_x=\{x\} \cup \{z\in V\;|\;z<x,\, (x,z)\in E\}$ a clique of $G$. So $G$ is a chordal completion of the graph $\mathscr{G}(\mathfrak
{S})$.
\end{proof}

 The {\em finest squarefree basis} of $A\subset\K[\bar{x}]$, denoted by $\mathscr{F}(A)$, is the set of all the irreducible factors of the elements of $A$ (see \cite{DBLP:conf/automata/Collins75}, P.146). Then the following observation is clear:

\begin{proposition}\label{prop:square}
If a finite set $A\subset\K [\bar{x}]$ has a chordal structure $G$, then $G$ is also a chordal structure of the set $\mathscr{F}(A)$ and the set $A\cup\mathscr{F}(A)$.
\end{proposition}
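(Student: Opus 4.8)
The plan is to follow the same strategy as in Propositions \ref{prop:coeff} and \ref{prop:res}: once we verify that the associated graph of the new set is a subgraph of $\mathscr{G}(A)$, it is automatically a subgraph of the chordal structure $G$, and then $G$ qualifies as a chordal completion of that associated graph by definition. Since $G$ is already chordal, the only thing to check is the subgraph containment.

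The crux is therefore the variable-containment claim that $\var(\mathscr{F}(A))\subseteq\var(A)$. Because every element of $\mathscr{F}(A)$ is an irreducible factor of some $f\in A$, this reduces to showing that if $g$ divides a nonzero $f\in\K[\bar{x}]$ then $\var(g)\subseteq\var(f)$. First I would write $f=gh$ and argue one variable at a time: for any $x_k\in\var(g)$, regard $f,g,h$ as polynomials in $x_k$ over the domain $\K[x_1,\dots,x_{k-1},x_{k+1},\dots,x_n]$; their leading coefficients multiply to a nonzero element since $\K[\bar{x}]$ has no zero divisors, so $\degree(f,x_k)=\degree(g,x_k)+\degree(h,x_k)\geq\degree(g,x_k)>0$, forcing $x_k\in\var(f)$. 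This is the only step that is not purely formal, and it is exactly where the hypothesis that $\K$ is a domain is used; I expect it to be the main (though minor) obstacle.

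Given the containment, for any edge $(x_i,x_j)$ of $\mathscr{G}(\mathscr{F}(A))$ there is an irreducible factor $g$ of some $f\in A$ with $x_i,x_j\in\var(g)\subseteq\var(f)$, so $(x_i,x_j)$ is already an edge of $\mathscr{G}(\{f\})$, hence of $\mathscr{G}(A)$, which is a subgraph of $G$. Thus $\mathscr{G}(\mathscr{F}(A))$ is a subgraph of $G$, and since $G$ is chordal it is a chordal structure of $\mathscr{F}(A)$.

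Finally, for $A\cup\mathscr{F}(A)$ I would observe that its vertex set is $\var(A)\cup\var(\mathscr{F}(A))=\var(A)$ and that its edge set is the union of the edge sets of $\mathscr{G}(A)$ and $\mathscr{G}(\mathscr{F}(A))$, because any polynomial witnessing an edge lies in one of the two sets. Since both $\mathscr{G}(A)$ and $\mathscr{G}(\mathscr{F}(A))$ are subgraphs of $G$, so is $\mathscr{G}(A\cup\mathscr{F}(A))$, and $G$ is therefore a chordal structure of $A\cup\mathscr{F}(A)$ as well.
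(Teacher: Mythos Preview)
Your proof is correct and follows essentially the same approach as the paper: both argue that $\mathscr{G}(\mathscr{F}(A))$ is a subgraph of $\mathscr{G}(A)$ because every element of $\mathscr{F}(A)$ is a factor of some $f\in A$ and hence has variables contained in $\var(f)$. You simply fill in more detail (the degree argument using that $\K$ is a domain, and the explicit treatment of $A\cup\mathscr{F}(A)$) than the paper, which leaves these points as ``clearly''.
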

\begin{proof}
Clearly we have  $\var(\mathscr{F}(A))\subset\var(A)$. Since any polynomial in $\mathscr{F}(A)$ is either a polynomial in $A$ or a factor of a polynomial in $A$, $\mathscr{G}(\mathscr{F}(A))$ is a subgraph of $\mathscr{G}(A)$, which is again a subgraph of $G$. 
\end{proof}
\section{CAD Projections With perfect elimination orderings}\label{section:cad}
In this section, we first prove that some well-known CAD algorithms, with the projection operators used in them preserving the chordal structure, also preserve the chordal structure themselves if the projection is done in accordance with a perfect elimination ordering. Based on this, the perfect elimination orderings are recommended while computing the CAD projection. Then, through two examples, we compare the size of the full sets of projection polynomials obtained by the CAD algorithms with respect to perfect elimination orderings and other random variable orderings. The examples show that the perfect elimination orderings can result in much smaller polynomial sets than other naive ones do.


A key concept in the CAD algorithms is the projection operator. In an abstract sense, the projection operator $\proj$ is a mapping which maps an $n$-dimensional polynomial set $P$ containing some variable $x$ to an $(n-1)$-dimensional polynomial set $P'$ such that $x\not\in \var(P')$ and any $P'$-sign-invariant CAD of $\R^{n-1}$ can be induced by a $P$-sign-invariant CAD of $\R^n$. The \emph{projection procedure} of $P$ with variable ordering $x_{n}>\cdots>x_1$ is defined to be the sequence of polynomial sets $\{P_{n},P_{n-1},\ldots,P_1\}$, where $P_n=P$ and each $P_{i}=\proj(P_{i+1},x_{i+1})$ is a set of polynomials in $x_i,\ldots,x_1$ for $i=n-1,\ldots, {1}$. In particular, we call $P_{n-1},\ldots,P_1$ the \emph{projection polynomial sets} \emph{w.r.t.} the ordering $x_{n}>\cdots>x_1$.
 
The projection operator introduced by Collins is defined to be a union of some coefficients, some resultants, some discriminants and some subresultants with respect to a fixed variable $x$. Many improved projection operators are obtained by simplifying Collins' projection operator.

\begin{proposition}\label{prop:proj1}
Set $A\subset\K[\bar{y},x]$ to be a polynomial set. Suppose that a projection operator $\proj(A,x)$ only consists of some coefficients, contents, resultants, discriminants and some subresultants of the polynomials in $A\cup\mathscr{F}(A)$ w.r.t. $x$ and $G$ is a chordal structure of $A$ with a perfect elimination ordering such that $x>y_i$ for any $y_i\in \bar{y}$, then $G$ is also a chordal structure of the polynomial set $\proj(A,x)$. 
\end{proposition}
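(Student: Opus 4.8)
The plan is to reduce the statement to the three facts already in hand, Propositions \ref{prop:coeff}, \ref{prop:res} and \ref{prop:square}, by splitting $\proj(A,x)$ into the elementary pieces those propositions control. The organizing principle is the characterization that ``$G$ is a chordal structure of a set $S$'' means exactly that $G$ is chordal and $\mathscr{G}(S)$ is a subgraph of $G$, together with two monotonicity remarks about the associated graph: if $S'\subseteq S$ then $\mathscr{G}(S')$ is a subgraph of $\mathscr{G}(S)$, and $\mathscr{G}(S_1\cup S_2)$ is the union of $\mathscr{G}(S_1)$ and $\mathscr{G}(S_2)$. Since $G$ is chordal by hypothesis, the whole burden of the proof is to show that $\mathscr{G}(\proj(A,x))$ is a subgraph of $G$.

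First I would enlarge the base set: by Proposition \ref{prop:square}, $G$ is a chordal structure of $A\cup\mathscr{F}(A)$, which is precisely the set from which the projection polynomials are drawn. Because $\mathscr{G}(A\cup\mathscr{F}(A))$ is then a subgraph of $G$, the first monotonicity remark shows that $\mathscr{G}(\{f\})$ is a subgraph of $G$ for every $f\in A\cup\mathscr{F}(A)$ and that $\mathscr{G}(\{f,g\})$ is a subgraph of $G$ for every pair; hence $G$ is a chordal structure of each such singleton and each such pair. Here it is essential to observe that the perfect elimination ordering with $x>y_i$ is a property of $G$ alone and is unaffected by which polynomial set we regard $G$ as completing, so this ordering remains available when we apply the earlier propositions to singletons and pairs. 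I would then dispatch the five types of projection polynomials: for coefficients, contents and discriminants, Proposition \ref{prop:coeff} gives that $G$ is a chordal structure of $\coeff(f,x)\cup\{\cont(f,x),\disc(f,x)\}$ for each $f$; for resultants and subresultants, Proposition \ref{prop:res}, invoked with the ordering just noted, gives that $G$ is a chordal structure of $\{\res(f,g,x),S_{\mu+1},\ldots,S_0\}$ for each pair $f,g$. Each of these finitely many sets therefore has associated graph a subgraph of $G$, so by the second monotonicity remark their union does too, and since $\proj(A,x)$ is contained in that union, $\mathscr{G}(\proj(A,x))$ is a subgraph of $G$, as required.

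I do not anticipate a genuine obstacle, since the mathematical content is entirely carried by the earlier propositions; the proof is essentially an exercise in bookkeeping. The only point deserving explicit care is the transfer of hypotheses from $A\cup\mathscr{F}(A)$ down to the individual singletons and pairs to which Propositions \ref{prop:coeff} and \ref{prop:res} apply: one must note that both the subgraph condition and, for the resultant case, the perfect elimination ordering with $x>y_i$ survive this restriction. They do, precisely because chordality and the perfect elimination ordering are intrinsic to $G$, while being a chordal structure of a set depends on the set only through the subgraph condition, which is preserved under passing to subsets and taking finite unions.
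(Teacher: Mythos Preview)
Your proposal is correct and follows essentially the same approach as the paper: use Proposition~\ref{prop:square} to pass from $A$ to $A\cup\mathscr{F}(A)$, then case-split on whether a given element of $\proj(A,x)$ is a coefficient/content/discriminant (handled by Proposition~\ref{prop:coeff}) or a resultant/subresultant (handled by Proposition~\ref{prop:res}), and finally assemble the pieces. The paper's version is just terser, taking a single $f\in\proj(A,x)$ and treating the two cases in one sentence each; your more explicit handling of the monotonicity of $\mathscr{G}(\cdot)$ and of the fact that the perfect elimination ordering is intrinsic to $G$ is welcome bookkeeping but not a different argument.
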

\begin{proof}
Set $f\in\proj(A,x)$. If $f$ is the discriminant, the content or a coefficient of a polynomial in $A\cup\mathscr{F}(A)$, then $G$ is a chordal structure of $\{f\}$ by Prop. \ref{prop:square} and Prop. \ref{prop:coeff}. If $f$ is a resultant or a subresultant of two polynomials in $A\cup\mathscr{F}(A)$, then by Prop. \ref{prop:square} and Prop. \ref{prop:res}, the same conclusion holds. Hence $G$ is a chordal structure of any $\{f\}\subset\proj(A,x)$, thus also a chordal structure of the set $\proj(A,x)$.
\end{proof}
\begin{proposition}\label{prop:proj2}
Set $A\subset\K[\bar{x}]$ to be a polynomial set. Suppose the graph $G$ is a chordal structure of $A$ and $x_{n}>\cdots>x_1$ is a perfect elimination ordering of $G$. If a projection operator $\proj(S,\cdot)$, with $S$ the polynomial set on which it operates, consists of some coefficients, contents, resultants, discriminants and some subresultants of the polynomials in $S\cup\mathscr{F}(S)$ and $\{P_{n}=A,P_{n-1},\ldots,P_1\}$ is a projection procedure of $A$ obtained via the projection ordering $x_{n}>\cdots>x_1$, then $G$ is a chordal structure of any $P_i$, $i=1,\ldots,n$. 
\end{proposition}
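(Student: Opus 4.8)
The plan is to prove the statement by downward induction on the index $i$, running from $i=n$ down to $i=1$, using Proposition \ref{prop:proj1} as the single engine that advances one elimination step. The base case is immediate: $P_n=A$, and $G$ is a chordal structure of $A$ by hypothesis.

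For the inductive step, I would assume that $G$ is a chordal structure of $P_{i+1}$ and aim to show that $G$ is a chordal structure of $P_i=\proj(P_{i+1},x_{i+1})$. The crucial observation is that, by the definition of the projection procedure, $P_{i+1}$ is a set of polynomials in the variables $x_{i+1},\ldots,x_1$ only; the variables $x_{i+2},\ldots,x_n$ eliminated in earlier steps no longer occur. Hence I may regard $P_{i+1}$ as a subset of $\K[\bar{y},x]$ with $x=x_{i+1}$ and $\bar{y}=(x_1,\ldots,x_i)$, which is exactly the setting of Proposition \ref{prop:proj1}.

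With this identification the hypotheses of Proposition \ref{prop:proj1} fall into place. First, the projection operator $\proj(P_{i+1},x_{i+1})$ consists of coefficients, contents, resultants, discriminants and subresultants of the polynomials in $P_{i+1}\cup\mathscr{F}(P_{i+1})$ \emph{w.r.t.} $x_{i+1}$, which is precisely the assumed form of $\proj(S,\cdot)$ specialized to $S=P_{i+1}$. Second, $G$ is a chordal structure of $P_{i+1}$ by the inductive hypothesis. Third, the given perfect elimination ordering $x_n>\cdots>x_1$ of $G$ satisfies $x_{i+1}>x_j$ for every $j\le i$, so with $x=x_{i+1}$ the required relation $x>y_j$ holds for all $y_j\in\bar{y}$. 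Applying Proposition \ref{prop:proj1} then yields that $G$ is a chordal structure of $P_i$, closing the induction.

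The one point that deserves care ---and the only place where the hypotheses genuinely interact--- is the third condition: one must ensure that at the step where $x_{i+1}$ is projected out, this variable is larger, in the fixed perfect elimination ordering, than all the remaining variables $x_1,\ldots,x_i$. This is exactly guaranteed by performing the projection in the order dictated by the perfect elimination ordering, eliminating $x_n$ first, then $x_{n-1}$, and so on; had the elimination order disagreed with the ordering, the comparison $x>y_j$ could fail and Proposition \ref{prop:proj1} would no longer apply. Since the already-eliminated variables $x_{i+2},\ldots,x_n$ are absent from $P_{i+1}$, they impose no further constraint, and the restriction of the perfect elimination ordering to the surviving variables keeps $x_{i+1}$ at the top. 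This is what makes the induction go through with no residual obstacle, so the argument is essentially a clean bookkeeping of indices around the already-proved one-step result.
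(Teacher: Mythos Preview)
Your proposal is correct and mirrors the paper's own proof: both proceed by downward induction on $i$, taking $P_n=A$ as the base case and invoking Proposition~\ref{prop:proj1} at each step to pass from $P_{i+1}$ to $P_i=\proj(P_{i+1},x_{i+1})$. Your write-up is more explicit in checking that $x_{i+1}$ dominates the surviving variables in the perfect elimination ordering, but the argument is the same.
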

\begin{proof}
The proof is inductive on the index $i$: If $i=n$, then $P_i=A$, the conclusion holds.
Suppose that $G$ is a chordal structure of $P_k$ for some $k\leq n$. Since $P_{k-1}=\proj(P_{k},x_{k})$, $G$ is a chordal structure of $P_{k-1}$ by Prop. \ref{prop:proj1}. The proposition is proved.
\end{proof}


As can be seen from above, the original projection operator and most of the improved projection operators used in the CAD algorithms preserve the chordal structure of a polynomial set. For this reason, when the polynomial system is sparse and with the chordal structure, the perfect elimination orderings of that chordal structure are recommended while computing the CAD projection in order to preserve the chordal structure (as well as the sparsity embedded in the chordal structure) of the system. In this way, the growth of the size of the polynomial sets obtained during the projection can possibly be reduced.

In the following we take care of two popular projection operators and indicate that they preserve the chordal structure. 

\begin{definition}(\cite{DBLP:journals/jsc/McCallum88})
Let $A\subset\Z[\bar{x}]$ be a finite polynomial set containing at least two variables. McCallum's projection operator {$\projmc(\cdot,\cdot)$} is defined as follows:
Let the polynomial set $B$ be the finest squarefree basis of all the primitive parts of the elements of $A$ with positive degrees, then 
\vspace{-1mm}
\begin{align*}
\projmc(A,x_n)=&\;\cont(A,x_{n})\,\cup\bigcup_{f,\,g\,\in\,B,\,f\neq g}\{\res(f,g,x_n)\}\\\vspace{-4mm}
\cup&\bigcup_{f\in B}\big((\coeff(f,x_n)\backslash\{0\})\cup\{\disc(f,x_n)\}\big).
\end{align*}
\end{definition}
\vspace{-1mm}
Since $B\subset\mathscr{F}(A)$, Prop. \ref{prop:proj2} holds for the operator $\projmc$. Indeed, by Prop. \ref{prop:square} and Prop. \ref{prop:proj1}, each set in the projection procedure of any $A\subset\Z[\bar{x}]$ obtained by the operator $\mathscr{F}(\projmc(\cdot,\cdot))$ preserves the chordal structure of $A$ as well.

\begin{example}\label{ex:4-1}
Consider the quantified formula below:
\begin{align*}
\exists x_1.~\;&x_1+x_4>0\wedge x_2+x_4\geq 0\wedge x_{3}^{2}+x_2<0\\
&\quad\wedge x_{3}^{3}+x_1\leq 0\wedge x_5+x_2>0\wedge x_5+x_1+x_2<0.
\end{align*}

The CAD-based methods solving this example rely on constructing a CAD for the set of polynomials \[F=\{
x_{1}+x_{4}, x_{2}+x_{4}, x_{3}^{2}+x_2, x_{3}^{3}+x_1, x_5+x_2, x_5+x_1+x_2\}.\]  The associated graph of $F$ is a chordal graph, shown in Fig. \ref{fig:ex4-1}.

\vspace{-3mm}
\begin{figure}[htbp]
    \centering
    \begin{tikzpicture}[scale=0.4,line cap=round,line join=round,>=triangle 45,x=1cm,y=1cm]
\draw [line width=1pt] (-2,-4) circle (1cm);
\draw [line width=1pt] (4,0) circle (1cm);
\draw [line width=1pt] (0,3) circle (1cm);
\draw [line width=1pt] (2,-4) circle (1cm);
\draw [line width=1pt] (-4,0) circle (1cm);

\draw [line width=1pt] (-3,0)-- (3,0);
\draw [line width=1pt] (3.553,-0.894)-- (2.447,-3.106);
\draw [line width=1pt] (0.275,2.038)-- (1.725,-3.038);
\draw [line width=1pt] (0.8,2.4)-- (3.2,0.6);
\draw [line width=1pt] (-3.168,-0.555)-- (1.168,-3.445);
\draw [line width=1pt] (-1.168,-3.446)-- (3.168,-0.555);
\draw [line width=1pt] (-1,-4)-- (1,-4);

\begin{scriptsize}
\draw[color=black] (-2,-4) node[scale=1.4] {$x_3$};
\draw[color=black] (4,0) node[scale=1.4] {$x_2$};
\draw[color=black] (0,3) node[scale=1.4] {$x_4$};
\draw[color=black] (2,-4) node[scale=1.4] {$x_1$};
\draw[color=black] (-4,0) node[scale=1.4] {$x_5$};
\end{scriptsize}
\end{tikzpicture}
\vspace{-1mm}
    \caption{\;$\mathscr{G}(F)$}
    \label{fig:ex4-1}
\end{figure}
\vspace{-3mm}

We successively use the operator $\mathscr{F}(\projmc(\cdot,\cdot))$ in accordance with the perfect elimination ordering $x_4>x_5>x_3>x_2>x_1$ and obtain a projection procedure:
\begin{align*}
{F}'=&\mathscr{F}(\projmc(F,x_4))\\
=&\{x_{1},x_{2},x_{3}^{2}+x_2, x_{3}^{3}+x_1, x_5+x_2, x_5+x_1+x_2,x_1-x_2\},\\
{F}''=&\mathscr{F}(\projmc({F}',x_5))\\
=&\{x_{3}^{2}+x_2,x_{3}^{3}+x_1,x_1-x_2,x_1+x_2,x_1,x_2\},\\
{F}'''=&\mathscr{F}(\projmc({F}'',x_3))\\ 
=&\{x_1-x_2,x_1+x_2,x_1,x_2,x_{2}^{3}+x_{1}^2\},\\
{F}''''=&\mathscr{F}(\projmc({F}''',x_2))\\ 
=&\{x_{1},x_{1}+1,x_1-1\}.
\end{align*}

When a random variable ordering, say, $x_1>x_2>x_3>x_4>x_5$ is used instead, we finally obtain: 
\begin{align*}
F''''=\{&x_{5},x_{5}-1,x_5+1,x_{5}-2,x_5-8,4x_{5}-1,8x_5-1,27x_{5}-4,\\
&3x_5-1,x_{5}^{2}+x_{5}-1,x_5^2+x_5+1,x_5^2-x_5+1,x_{5}^{2}-3x_{5}+9,\\
&x_{5}^{3}-x_{5}^{2}+2x_{5}-1,x_5^5-3x_5^4+3x_5^3+5x_5^2+2x_5-1,\\
&x_5^4+x_5^3+x_5^2+x_5+1,x_{5}^{5}+4x_{5}^{3}-x_{5}^{2}+2x_{5}-1,\\
&-x_5^3+4x_5^2-3x_5+1,x_5^5-3x_5^4-6x_5^3-19x_5^2+9x_5-1\}.
\end{align*}

The above process presents the results of the computation with respect to two specific variable orderings, while Table \ref{tab:diff-order} shows the experimental results with respect to two classes of variable orderings with the first two maximal variables being $x_4>x_5>\cdots$ and $x_1>x_2>\cdots$, respectively.
Therein, the ``PEO'' column shows whether the variable orderings with the corresponding maximal variables in the ``Max-Var'' column are perfect elimination orderings or not. By ``\#proj'' we denote the sum of the size of those projection polynomial sets obtained in the CAD projection \emph{w.r.t.} a variable ordering. Finally, the ``avg'' column shows the average of those \#proj values with respect to the six orderings corresponding to each pair of maximal variables. 

For this example, it is true that the CAD projection tends to generate fewer polynomials when a perfect elimination ordering is used instead of a random one. 

\vspace{-2mm}
\begin{table}[htbp]
    \centering
    \begin{tabular}{c|c|c|c|c|c}
\toprule
Max-Var & PEO & \multicolumn{3}{c}{\#proj} & \multicolumn{1}{|c}{avg}\\
\hline
\multirow{2}{*}{$\;\;\;x_4>x_5>\cdots$}& \multirow{2}{*}{yes}& 21 & 20 & 21 &\multirow{2}{*}{20.7}\\ 
\cline{3-5}
  & &  \multicolumn{1}{c|}{21} & 20 & 21 &\\
\hline
\multirow{2}{*}{$\;\;\;x_1>x_2>\cdots$}&
\multirow{2}{*}{no}& 53 & 74 & 34 &\multirow{2}{*}{48.8}\\
\cline{3-5}
 &   & \multicolumn{1}{c|}{33} & 52 & 47   &
\\
\bottomrule
\end{tabular}
\vspace{1mm}
    \caption{Different variable orderings in
    Example \ref{ex:4-1}.}
    \label{tab:diff-order}
\end{table}

\end{example}

\vspace{-11mm}
\begin{definition}(\cite{DBLP:journals/jsc/Brown01a}) Let $A\subset\Z[\bar{x}]$ be a squarefree basis \cite{DBLP:conf/automata/Collins75} with at least two variables. 
Brown's projection operator $\projbrown(\cdot,\cdot)$ is defined as follows:
Set $B=\{f\in A\;|\;x_n\in\var(f)\}$, then
\begin{align*}
\projbrown(A,x_n)=\;\cont(A,x_{n})\,&\cup\\
\bigcup_{f,\,g\,\in\,B,\,f\neq g}\{\res(f,g,x_n)\}&\cup\bigcup_{f\in B}\{\lc(f,x_n),\disc(f,x_n)\}.\\
\end{align*}
\end{definition}
\vspace{-5mm}
Similarly, Prop. \ref{prop:proj2} holds for Brown's operator $\projbrown$. Also, by Prop. \ref{prop:square} and Prop. \ref{prop:proj1}, each set in the projection procedure of any $A\subset\Z[\bar{x}]$ obtained by the operator $\mathscr{F}(\projbrown(\cdot,\cdot))$ preserves the chordal structure of $A$ as well.

\begin{example}\label{ex:4-2}
Let $$F=\{x_{1}+x_{2}+2,x_{2}x_{3}+2x_{3}+x_{1},x_{3}x_{4}+x_{2}x_{4}+x_{3}-1,x_{4}+x_{2}\}$$ be a squarefree basis.
The associated graph $\mathscr{G}(F)$ of $F$ is a chordal graph with $x_{1}>x_{2}>x_{3}>x_{4}$ a perfect elimination ordering. As in Example \ref{ex:4-1}, by  successively applying the operator $\mathscr{F}(\projbrown(\cdot,\cdot))$ in that order, we can finally obtain the full set of projection polynomials: $\{x_{4},x_{4}+1,x_{4}-1,x_{4}-2\}$.

If a random projection ordering, say, $x_{2}>x_{3}>x_{4}>x_{1}$, which is not a perfect elimination ordering, is used in the CAD projection, then we will finally obtain the following polynomial set:
\begin{align*}
\{&x_1, 5+4x_1, 9+8x_1, 25+24x_1, x_1+1, x_1+2, x_1+3, 3x_1+4,\\
&2x_1+5, 32x_1^2+56x_1+25, 20x_1^2+44x_1+25, 13x_1^2+34x_1+25,\\
&4x_1^3+24x_1^2+44x_1+25,4x_1^3-71x_1^2-172x_1-100,\\
& x_1^2+4x_1+5,x_1^4+7x_1^3+21x_1^2+34x_1+25\}.
\end{align*}

\vspace{-3mm}
\begin{table}[htbp]
    \centering
    \begin{tabular}{c|c|c|c|c|c}
\toprule
Max-Var & PEO & \multicolumn{3}{c}{\#proj} & \multicolumn{1}{|c}{avg} \\
\hline
\multirow{2}{*}{$\;\;\;x_1>\cdots$} & \multirow{2}{*}{yes}  & 13 & 13 & 13 & \multirow{2}{*}{12.7}\\ 
\cline{3-5}
&  & 13 & 12 & 12 &\\ 
\hline
\multirow{2}{*}{$\;\;\;x_2>\cdots$} & \multirow{2}{*}{no}  & 26 & 25 & 35 & \multirow{2}{*}{33.3}\\
\cline{3-5}
    &    & 37 &36 & 41 & \\
\hline
\multirow{2}{*}{$\;\;\;x_3>\cdots$} & \multirow{2}{*}{no} & 14  & 14  & 22 & \multirow{2}{*}{17.2}\\
\cline{3-5}
   &     & 21 & 15   & 17  & \\  
\hline
\multirow{2}{*}{$\;\;\;x_4>\cdots$} & \multirow{2}{*}{yes}   & 12  & 12 & 19 & \multirow{2}{*}{16.2}\\
\cline{3-5}
   &  & 24  & 14    & 16 & \\
\bottomrule
\end{tabular}
\vspace{1mm}
    \caption{Different variable orderings in Example \ref{ex:4-2}.}
    \label{tab:diff-order2}
\end{table}
\vspace{-7mm}

Moreover, the sum of the size of the projection polynomial sets obtained via each of the 24 variable orderings is shown in Table \ref{tab:diff-order2}. Again we see that, when the perfect elimination orderings are used, the CAD projection may yield smaller polynomial sets than it does when other kind of variable orderings are used instead.

\end{example}

\section{Complexity Analysis: the ``Size'' of the Projection Polynomials }\label{section:complexity}
In this section, we discuss the complexity of the CAD computation via a perfect elimination ordering by giving a new estimate of the 
``size'' of the full set of projection polynomials obtained via that ordering. The results in this section are obtained by the method of Bradford et al. \cite{DBLP:journals/jsc/BradfordDEMW16}, based on which they estimate the growth of the ``size'' of the projection polynomials.


For a set of polynomials $A\subset\Z[\bar{x}]$, \emph{the combined degree} \cite{McCallum85} of $A$ refers to the number $\max_{x\in\var(A)}\deg(\prod A,x)$, where $\prod A$ is the product of all the polynomials in $A$. And we say the set $A$ has \emph{the} $(m,d)$\emph{-property} \cite{DBLP:journals/jsc/BradfordDEMW16} if it can be partitioned into at most $m$ pairwise disjoint subsets, each with combined degree at most $d$.
This type of properties are useful while tracking the size of the projection polynomials.


The following lemma estimates the growth of the ``size'' of the projection polynomial sets while using the operator $\projmc$.

\begin{lemma}(\cite{DBLP:journals/jsc/BradfordDEMW16}, Lemma 11, Corollary 12)\label{lem:m2}
Suppose $A\subset\R[\bar{x}]$ is a set of polynomials with the $(m,d)$-property and $x\in\var(A)$, then the set \emph{$\projmc(A,x)$} has the $(M, 2d^2)$-property with
$M=\left\lfloor \frac{(m+1)^2}{2}\right\rfloor$. If $m>1$, then the set \emph{$\projmc(A,x)$} has the $(m^2, 2d^2)$-property.
\end{lemma}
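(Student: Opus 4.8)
The plan is to propagate the $(m,d)$-property through $\projmc$ using two tools: degree estimates for the basic operations, and the multiplicativity of $\res$ and $\disc$ under products. Fix a partition $A=A_1\cup\cdots\cup A_m$ with each $A_i$ of combined degree at most $d$, pass to the squarefree basis $B$ of the primitive parts, and write $B_i$ for the part of $B$ coming from $A_i$ and $F_i=\prod B_i$. I first note that each $B_i$ still has combined degree at most $d$, since replacing repeated factors by their radical cannot raise any degree. The workhorse estimate is the Sylvester bound $\deg(\res(f,g,x),y)\le\deg(f,x)\deg(g,y)+\deg(g,x)\deg(f,y)$ valid for every $y\ne x$, from which I read off $\deg(\disc(f,x),y)\le(2\deg(f,x)-1)\deg(f,y)$, $\deg(\lc(f,x),y)\le\deg(f,y)$, and, for the non-leading coefficients of $f$, a product-degree bound of at most $\deg(f,x)\deg(f,y)$.

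The decisive observation is that $\res(F_i,F_j,x)=\prod_{f\in B_i,\,g\in B_j}\res(f,g,x)$ and that $\disc(F_i,x)$ equals, up to a leading-coefficient factor, $\big(\prod_{f\in B_i}\disc(f,x)\big)\big(\prod_{f<g\in B_i}\res(f,g,x)^2\big)$. Consequently, for an off-diagonal pair $i<j$ the whole family of cross-resultants $\{\res(f,g,x):f\in B_i,\,g\in B_j\}$ has combined degree equal to $\deg(\res(F_i,F_j,x),y)\le 2d^2$, while for a single index $i$ the family $\{\disc(f,x):f\in B_i\}\cup\{\res(f,g,x):f<g\in B_i\}$ has combined degree at most $\deg(\disc(F_i,x),y)\le(2d-1)d=2d^2-d$. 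Indexing by the $\binom{m+1}{2}$ unordered pairs $\{i,j\}$ with $1\le i\le j\le m$ thus accounts for every resultant and every discriminant occurring in $\projmc(A,x)$ with $\binom{m+1}{2}$ sets, each of combined degree at most $2d^2$; moreover the diagonal sets still carry a slack of $d$, into which I place the leading coefficients of that block, whose product is $\lc(F_i,x)$ of $y$-degree at most $d$.

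It remains to house the non-leading coefficients and the contents. By the estimate above, the non-leading coefficients from a single block $B_i$ have combined degree at most $\sum_{f\in B_i}\deg(f,x)\deg(f,y)\le d^2$, so grouping the blocks two at a time yields $\lceil m/2\rceil$ further sets, each of combined degree at most $2d^2$; the contents, being divisors of coefficients of small total $y$-degree, can be accommodated with a little additional care. Altogether $\projmc(A,x)$ is partitioned into $\binom{m+1}{2}+\lceil m/2\rceil=\lfloor(m+1)^2/2\rfloor$ sets of combined degree at most $2d^2$, which is the asserted $(M,2d^2)$-property with $M=\lfloor(m+1)^2/2\rfloor$. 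Finally, since $\lfloor(m+1)^2/2\rfloor\le m^2$ for every $m>1$, the same partition is a fortiori one into at most $m^2$ sets, so the $(m^2,2d^2)$-property follows immediately in that range.

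The main obstacle is the combined-degree bookkeeping of the last two paragraphs: splitting the coefficient set into its leading part, which multiplies to $\lc(F_i,x)$ and fits the $d$ of slack in the diagonal discriminant sets, and its non-leading part, which packs exactly two blocks per set at $2d^2$, and then checking that the contents are absorbed without overflow, is precisely what forces the pairing and produces the floor in $\lfloor(m+1)^2/2\rfloor$. By contrast the conceptual step—reducing a quadratic number of pairwise resultants to the single combined degree of $\res(F_i,F_j,x)$, and an entire block of discriminants and within-block resultants to that of $\disc(F_i,x)$—is routine once the multiplicativity identities are in place, and is where the blow-up $m\mapsto\lfloor(m+1)^2/2\rfloor$ in the number of sets genuinely originates.
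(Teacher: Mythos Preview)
The paper does not supply a proof of this lemma at all: it is quoted verbatim as Lemma~11 and Corollary~12 of Bradford \emph{et al.}~\cite{DBLP:journals/jsc/BradfordDEMW16} and used as a black box, so there is nothing in the present paper to compare your argument against.

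That said, your sketch is essentially the argument of the cited source: partition $A$ into $m$ blocks of combined degree $\le d$, use the multiplicativity of $\res$ and $\disc$ to collapse all cross-resultants between blocks $i,j$ into a single factor of combined degree $\le 2d^2$ and all discriminants and within-block resultants of block $i$ into a factor of combined degree $\le 2d^2-d$, and then pack the coefficients and contents into the remaining slack plus $\lceil m/2\rceil$ extra groups to reach the count $\binom{m+1}{2}+\lceil m/2\rceil=\lfloor(m+1)^2/2\rfloor$. The only place where your write-up is genuinely loose is the sentence ``the contents \ldots can be accommodated with a little additional care'': a clean way to discharge it is to observe that $\cont(f,x)\cdot\lc(\mathrm{pp}(f),x)=\lc(f,x)$, so the product of the contents from $A_i$ together with the leading coefficients from $B_i$ has $y$-degree at most $\deg(\prod A_i,y)\le d$, and hence both fit into the $d$ of slack in the diagonal discriminant group without overflow. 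With that remark, your bookkeeping closes and yields the stated $(M,2d^2)$-property; the final inequality $\lfloor(m+1)^2/2\rfloor\le m^2$ for $m>1$ is immediate.
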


In Subsection \ref{section:com_general} we recall the general complexity analysis for CAD while in Subsection \ref{section:com_per} we provide the complexity analysis with respect to the case where a perfect elimination ordering is used during the CAD projection.

\vspace{-1.5mm}
\subsection{Complexity Analysis for CAD with General Variable Ordering }\label{section:com_general}
 
In Table \ref{tab:cad_general_growth} (\cite{DBLP:journals/jsc/BradfordDEMW16} Table 1), Lemma \ref{lem:m2} is applied recursively to estimate the growth of the pair $(m,d)$ in the $(m,d)$-property during the CAD projection rendered by the operator $\projmc$.  The  ``Number'' column shows the upper bound of the number of the subsets in a possible partition and the  ``Degree'' column shows the upper bound of the  corresponding combined degree of those subsets, and $M=\left\lfloor \frac{(m+1)^2}{2}\right\rfloor$.
 
\vspace{-3mm}
\begin{table}[htbp]
    \centering
    \caption{Growth of the ``size'' of the polynomial sets in CAD projection.}
    \vspace{-4mm}
    \begin{tabular}{c c c}
       
        Variables & Number        & Degree            \\
        \hline
        \hline
        $n$       & $m$           & $d$               \\
        \hline
        $n-1$     & $M$           & $2d^2$            \\
        $n-2$     & $M^2$         & $8d^4$            \\
        $\vdots$  & $\vdots$      & $\vdots$          \\
        $n-r$       & $M^{2^{r-1}}$ & $2^{2^r-1}d^{2^r}$\\
        $\vdots$  & $\vdots$      & $\vdots$          \\
        $1$       & $M^{2^{n-2}}$ & $2^{2^{n-1}-1}d^{2^{n-1}}$
    \end{tabular}
    \label{tab:cad_general_growth}
\end{table}
\vspace{-4mm}

The number of the real roots of the product of the polynomials in a univariate polynomial set with the $(m,d)$-property is at most $md$. And the number of the corresponding cells in $\R^1$ is twice the number of the real roots plus one. Therefore, the total number of the cells in the CAD of $\R^n$ is at most the product of those $2{m'}{d'}+1$, with ${m'}$ and ${d'}$ taking the corresponding values in the ``Number'' and the ``Degree'' columns in Table \ref{tab:cad_general_growth}, \emph{i.e.}
$$ (2md+1)\prod_{r=1}^{n-1}\left[2(2^{2^r-1}M^{2^{r-1}}d^{2^r})+1\right].$$
So the following estimate is obtained:

\vspace{-1.5mm}
\begin{theorem}\label{thm:general-cells}(\cite{DBLP:journals/jsc/BradfordDEMW16})
The number of the CAD cells in $\R^n$ obtained by the operator \emph{$\projmc$}  is
$O((2d)^{2^n-1}M^{2^{n-1}-1}m)$ with $M=\left\lfloor \frac{(m+1)^2}{2}\right\rfloor$.
\end{theorem}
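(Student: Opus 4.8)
The plan is to establish the bound by directly estimating the displayed product for the total cell count using the growth table, Table~\ref{tab:cad_general_growth}. The starting point is the formula
$$(2md+1)\prod_{r=1}^{n-1}\left[2\left(2^{2^r-1}M^{2^{r-1}}d^{2^r}\right)+1\right],$$
which already encodes the recursive application of Lemma~\ref{lem:m2}. Since we only want an $O(\cdot)$ estimate, the first step is to replace each factor $2X+1$ by something of the same order as $X$; that is, I would note that $2X+1 = O(X)$ whenever $X$ is bounded below (here each $X\geq 1$ since $M,d\geq 1$), so up to a constant factor absorbed into the big-$O$, the whole product behaves like $md$ times $\prod_{r=1}^{n-1} 2^{2^r}M^{2^{r-1}}d^{2^r}$, discarding the $-1$ in the exponent of $2$ as another constant.

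Next I would collect the three separate products over $r=1,\dots,n-1$. The powers of $2$ contribute $\sum_{r=1}^{n-1}2^r = 2^n-2$; the powers of $M$ contribute $\sum_{r=1}^{n-1}2^{r-1} = 2^{n-1}-1$; and the powers of $d$ contribute $\sum_{r=1}^{n-1}2^r = 2^n-2$. Combining with the leading $md$ factor, the power of $d$ becomes $(2^n-2)+1 = 2^n-1$ and the power of $2$ is $2^n-2$, giving an overall factor $2^{2^n-2}d^{2^n-1}M^{2^{n-1}-1}m$. Finally I would fold the stray power of $2$ into $d$: since $2^{2^n-2}d^{2^n-1} = O\big((2d)^{2^n-1}\big)$ (the exponents $2^n-2$ and $2^n-1$ differ by a bounded amount, so the discrepancy is at most a constant factor $2^{-1}$), the bound collapses to
$$O\left((2d)^{2^n-1}M^{2^{n-1}-1}m\right),$$
which is exactly the claimed estimate with $M=\left\lfloor\frac{(m+1)^2}{2}\right\rfloor$.

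The three geometric sums in the exponents are the only real content, and they are routine, so I do not expect a genuine obstacle. The one point that needs care is the bookkeeping of constants in the big-$O$: each of the $n-1$ factors of the form $2X+1$ is replaced by $O(X)$, which hides a multiplicative constant \emph{per factor}, so naively one accumulates a constant raised to the power $n-1$, which is \emph{not} a legitimate $O(\cdot)$ constant as $n$ grows. Thus the careful version of the argument should bound $2X+1 \leq 3X$ termwise (valid since $X\geq 1$) and then observe that the product $\prod 3 = 3^{n-1}$, while superconstant, is dominated by the doubly-exponential main term $(2d)^{2^n-1}$; alternatively, and more cleanly, one verifies $2X+1\le 2^{1+\epsilon_r}X$ with the extra factors absorbable into the existing power-of-$2$ count. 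This is the subtlety the proof must address to make the asymptotic statement honest, and it is precisely where I would concentrate the writing; everything else is arithmetic with the exponent sums above.
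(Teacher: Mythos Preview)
Your proposal is correct and follows exactly the approach the paper indicates: the paper derives the displayed product from Table~\ref{tab:cad_general_growth} and then simply states the theorem (citing \cite{DBLP:journals/jsc/BradfordDEMW16}) without writing out the exponent arithmetic, which you supply. Your observation about the per-factor constants is a genuine refinement the paper omits; the clean fix you gesture at---bounding $2X_r+1\le 2X_r\bigl(1+2^{-2^r}\bigr)$ and noting $\prod_{r\ge1}(1+2^{-2^r})$ converges---is the right way to make the $O(\cdot)$ honest uniformly in $n$.
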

\vspace{-3.5mm}
\subsection{Complexity Analysis for CAD with Perfect Elimination Ordering}\label{section:com_per}


First, we introduce the concept of elimination tree to further exploit the relationship between the variables during the perfect elimination process. For convenience, we assume  in this section that all the associated graphs  of the polynomial sets are connected.

\begin{definition}(\cite{cifuentes2017chordal} Def. 2.5)\label{def:elim_tree}
Let $G$ be a chordal graph with a perfect elimination ordering $x_1<x_2<\ldots<x_n$. \emph{The elimination tree} of $G$ is the following directed spanning tree: for each $l\neq1$ there is an arc from $x_l$ toward the
largest $x_p$ that is adjacent to $x_l$ and $x_p < x_l$. We will say that $x_p$ is the parent of $x_l$ and $x_l$ is a child of $x_p$ (we denote by $\child(x_p)$ the set of all the children of $x_p$). Note that the elimination tree is rooted at $x_1$.
\end{definition}

\vspace{-2mm}
\begin{lemma}\label{lemma:path}
Let $T$ be the elimination tree of a chordal graph $G$. If $x_s<x_t$ and $(x_s,x_t)$ is an edge of $G$, then there is a path from $x_t$ to $x_s$ in $T$.
\end{lemma}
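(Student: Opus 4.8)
The plan is to prove the equivalent statement that $x_s$ is an ancestor of $x_t$ in $T$, so that following parent arcs upward from $x_t$ eventually reaches $x_s$; I would argue by strong induction on the index $t$ of the larger endpoint $x_t$, using the convention $x_1<x_2<\cdots<x_n$ from Definition~\ref{def:elim_tree}. The inductive hypothesis asserts the conclusion for every edge of $G$ whose larger endpoint has index strictly less than $t$. No separate base case is needed: whenever $x_s$ turns out to be the parent of $x_t$ the claim holds by a single arc without appeal to the hypothesis, and this is precisely what occurs at the bottom of the recursion.

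For the inductive step, take an edge $(x_s,x_t)$ with $x_s<x_t$ and let $x_p$ be the parent of $x_t$ in $T$. By Definition~\ref{def:elim_tree}, $x_p$ is the largest neighbor of $x_t$ lying below $x_t$; since $x_s$ is also a neighbor of $x_t$ below $x_t$, maximality gives $x_s\le x_p$. If $x_s=x_p$, then the arc $x_t\to x_p=x_s$ is already the desired path and we are done, so the substantive case is $x_s<x_p<x_t$.

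In that case I would invoke the perfect elimination property: the set $X_t=\{x_t\}\cup\{x_k\mid x_k<x_t,\,(x_k,x_t)\in E\}$ consisting of $x_t$ together with its lower neighbors is a clique. As both $x_s$ and $x_p$ are lower neighbors of $x_t$, they belong to $X_t$ and are therefore adjacent, so $(x_s,x_p)$ is an edge of $G$ with $x_s<x_p<x_t$. Because the larger endpoint $x_p$ has index strictly below $t$, the inductive hypothesis supplies a path from $x_p$ to $x_s$ in $T$; prefixing the arc $x_t\to x_p$ yields the required path from $x_t$ to $x_s$, completing the induction. The single point on which the whole argument turns, and the only place where chordality enters, is the adjacency of $x_s$ and $x_p$ forced by the clique condition: this is exactly what guarantees that the new edge has a strictly smaller larger endpoint, and hence what makes the induction terminate.
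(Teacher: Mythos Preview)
Your proof is correct and follows essentially the same route as the paper's: both take the parent $x_p$ of $x_t$, use maximality to get $x_s\le x_p$, invoke the clique property of $X_t$ to conclude $(x_s,x_p)\in E$, and then recurse on the shorter edge. The paper phrases the recursion informally (``by repeating the same discussion''), whereas you package it as a clean strong induction on $t$; your version is actually more explicit about why the clique condition, rather than chordality in the abstract, is the operative fact.
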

\vspace{-2mm}
\begin{proof}
If $x_s$ is the parent of $x_t$, there is nothing to prove. Otherwise, let $x_{t'}$ be the parent of $x_t$. By Def. \ref{def:elim_tree}, $x_s<x_{t'}$. Because $(x_s,x_t)$ and $(x_{t'},x_t)$ are edges of $G$ which is a chordal graph, $(x_s,x_{t'})$ is also an edge of $G$. We are done by repeating the same discussion.
\end{proof}

We will see that the path from a node to the root in the elimination tree is exactly the projection path of the polynomials with the variable corresponding to this node being the maximal variable. It is in this way the elimination tree describes the relationships between those variables, and this enables us to give a new complexity analysis.

Let $A\subset\Z[\bar{x}]$ be a polynomial set, $G$ a chordal structure of $A$ with a perfect elimination ordering  $x_1<\ldots<x_n$ and $T$ the elimination tree of $G$ with respect to that ordering. 
We denote by $P_i$ the projection polynomial sets of the operator $\projmc$, {\it i.e.}
$$P_n=A\,\text{\;and }\, P_i=\text{$\projmc$}(P_{i+1},x_{i+1})\;~ \text{for}~ i=n-1,\ldots,1.$$
We define $T_p$ to be the projection polynomial sets of the operator $\projmc$ along the elimination tree as follows:
\begin{definition}\label{def:tx}
For each $1\leq p\leq n$ we define 
$$A_{p}=\{f\in A\;|\,\max(\var(f))=x_p\}$$
and 
\begin{equation}\label{tree_elimination}
T_{p}=A_{p}\cup \bigcup_{x_l\,\in\,\child(x_p)} \text{$\projmc$}(T_{l},x_l).
\end{equation}
Note that when $x_p$ is a leaf of $T$, $T_{p}=A_{p}$.
\end{definition}

\begin{proposition}\label{prop:SetEqu}
For $i=1,\ldots,n$, we have
$$\{f\in T_i\;|\;x_i\in\var(f)\} =\{f\in P_i\;|\;x_i\in\var(f)\}.$$
Furthermore, we have
$$\{f\in \bigcup_i T_i\;|\;\var(f)\neq \emptyset\}=\{f\in \bigcup_i P_i\;|\;\var(f)\neq \emptyset\}.$$
\end{proposition}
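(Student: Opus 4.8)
The plan is to prove both identities together, first isolating the structural consequence of chordality that makes the fixed-order projection $\{P_i\}$ and the tree-organized projection $\{T_i\}$ perform the \emph{same} elimination operations, merely scheduled differently. The first step is a lemma: whenever a polynomial $f$ occurring anywhere in the projection has $\max(\var(f))=x_q$, then $\var(f)\subseteq X_q$, because $\{x_k<x_q : (x_k,x_q)\in E\}$ is a clique (the ordering is a perfect elimination ordering) and by Prop. \ref{prop:proj2} the graph $G$ is a chordal structure of every $P_i$ and every $T_i$. Consequently every polynomial that $\projmc(\cdot,x_q)$ produces from polynomials containing $x_q$ has all its variables in $X_q\setminus\{x_q\}$, and by Lemma \ref{lemma:path} each such variable is an \emph{ancestor} of $x_q$ in the elimination tree $T$. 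In short, eliminating $x_q$ moves polynomials strictly up the tree toward the root.

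Next I would prove the first identity $\{f\in T_i : x_i\in\var(f)\}=\{f\in P_i : x_i\in\var(f)\}$ by downward induction on $i$ from $n$ to $1$; write $Q_i$ and $R_i$ for the two sides. Note that in both $P_i$ and $T_i$ all variables are $\le x_i$, so ``$x_i\in\var(f)$'' is equivalent to ``$\max(\var(f))=x_i$''. The base case $i=n$ is immediate: $x_n$ is a leaf, so $P_n=A$, $T_n=A_n$, and both sides equal $A_n$. For the inductive step I would establish the two ``unfolded'' descriptions $Q_i=A_i\cup\bigcup_{x_l}\{f\in\projmc(Q_l,x_l):\max(\var(f))=x_i\}$ and $R_i=A_i\cup\bigcup_{x_l}\{f\in\projmc(R_l,x_l):\max(\var(f))=x_i\}$, where in both cases the union ranges over the descendants $x_l$ of $x_i$ in $T$. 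These two formulas are syntactically identical, so substituting $Q_l=R_l$ for $l>i$ from the induction hypothesis yields $Q_i=R_i$.

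Justifying the two descriptions is where the real work lies. For $Q_i$ I would trace the origin of each polynomial with maximal variable $x_i$ back through the fixed-order projection: it is either already present in $A$ (hence in $A_i$), or it first arises as a resultant, discriminant, coefficient or content in some $\projmc(\cdot,x_l)$ step with $l>i$; the non-content outputs use only the positive-degree (i.e.\ $x_l$-containing) inputs, which are exactly $Q_l$, and by the structural lemma such an output has maximal variable $x_i$ only when $(x_i,x_l)\in E$, i.e.\ when $x_l$ is a descendant of $x_i$. Once produced, such a polynomial is free of $x_{i+1},\dots,x_{l-1}$ and is therefore transmitted unchanged as its own content down to $P_i$. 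For $R_i$ the very same outputs appear, but they are created at node $x_l$ and then float up the path from the parent of $x_l$ to $x_i$ (which exists since $x_i$ is an ancestor of $x_l$), again being transmitted as content at each intermediate node. The second identity then follows from the first: every $f$ with $\var(f)\ne\emptyset$ has a unique maximal variable $x_i$, and one checks that $f\in\bigcup_j P_j$ iff $f\in Q_i$ (in the fixed order $f$ appears only at levels $\ge i$ and, being free of every variable above $x_i$, is transmitted down to $P_i$) and likewise $f\in\bigcup_j T_j$ iff $f\in R_i$; since $Q_i=R_i$, the two unions contain the same nonconstant polynomials.

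The step I expect to be the main obstacle is making the two unfolded descriptions rigorous, in particular reconciling that $\{P_i\}$ eliminates variables in one fixed total order while $\{T_i\}$ eliminates along tree branches. The reconciliation hinges entirely on the structural lemma: chordality guarantees that eliminating $x_l$ never couples polynomials lying in different branches of $T$ and that its outputs depend only on the $x_l$-containing inputs and are routed to ancestors of $x_l$, so that both schedules carry out the same elimination operations and transport each resulting polynomial to the single node/level fixed by its maximal variable. The fiddly details to handle carefully are the content-passthrough bookkeeping (a polynomial free of $x_l$ reappearing as its own content) and the possibly multi-step floating in the tree.
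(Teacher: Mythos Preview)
Your proposal is correct and takes essentially the same route as the paper. Both arguments proceed by downward induction on $i$, trace a polynomial back to the projection step where it first arises (using that the non-passthrough outputs of $\projmc(\cdot,x_l)$ depend only on the $x_l$-containing inputs), invoke Lemma \ref{lemma:path} to conclude that this $x_l$ is a descendant of $x_i$, and then use content-passthrough to carry the polynomial to level/node $i$. The only organizational difference is that the paper proves the two containments asymmetrically---declaring $\{f\in T_i:x_i\in\var(f)\}\subset P_i$ ``clear'' and establishing the reverse by the trace-back argument---whereas you make the proof symmetric by writing down identical unfolded recursions for $Q_i$ and $R_i$ and matching them via the induction hypothesis. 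Your version is slightly more explicit about the ``clear'' direction (which does require the passthrough/floating argument you describe), but the key mechanism is the same.
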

\vspace{-2mm}
\begin{proof}
It is clear that  $\{f\in T_i\;|\;x_i\in\var(f)\}\subset P_i$. So we only  need to prove that for any $1\leq p\leq n$,
\begin{equation}\label{containcccc}
\{f\in P_p\;|\;x_p\in\var(f)\}\subset T_p.
\end{equation}

When $p=n$, the claim holds. Assume that the containment (\ref{containcccc}) holds for $p=n,n-1,\ldots,k+1$. In the following we prove that it also holds for $p=k$.
Set $g\in P_k$ to be a polynomial such that $x_k\in\var(g)$ and set $l=\max(\{1\leq i\leq n\;|\;g\in P_i\})\geq k$. If $l=n$, then $g\in A_k\subset T_k$.  Otherwise $l<n$, and  $g\notin P_{l+1}$ but $g\in P_l$. Suppose that the polynomial $g$ is constructed from a polynomial $h\in P_{l+1}$ (or from two polynomials $h_1$, $h_2\in P_{l+1}$), then we have $x_{l+1}\in\var(h)$ (\emph{resp.}, $x_{l+1}\in\var(h_1)\cap\var(h_2)$). Since $x_k\in\var(g)$, $x_k\in\var(h)$ (\emph{resp.}, $x_k\in\var(h_1)\cup\var(h_2)$). So there is a polynomial ($h,h_1$ or $h_2$) that contains both the variables $x_k$ and $x_{l+1}$.
This means that  $(x_k,x_{l+1})$ is an edge of $G$. By Lemma \ref{lemma:path}, there is a path from $x_{l+1}$ to $x_k$ in $T$. On the other hand, $l+1>k$. So by the inductive assumption, (\ref{containcccc}) holds for $p=l+1$ and we have $h\in T_{l+1}$ (\emph{resp.}, $h_1,h_2\in T_{l+1}$). Thus $g\in \projmc(T_{l+1},x_{l+1})\subset T_q$ with $q$ such that $x_q$ is the parent of $x_{l+1}$ in $T$. Since $\max(\var(g))=x_k$, 
$g$ remains unchanged in the projection process along the sub-path from $x_q$ to $x_k$ in $T$ (which is defined by (\ref{tree_elimination})). Therefore, $g\in T_k$.



Further, for every $g\in \bigcup_iP_i$ so that $\var(g)\neq \emptyset$. Set $k=\min(\{1\leq i \leq n\,|\,g\in P_i\})$, then  $x_k\in \var(g)$ and $g\in T_k$.   Therefore $\{f\in \bigcup_i P_i\;|\;\var(f)\neq \emptyset\}\subset \{f\in \bigcup_i T_i\;|\;\var(f)\neq \emptyset\}$. The reverse containment can be proved similarly.
\end{proof}
\vspace{-2mm}
The above proposition shows that the difference between $\cup T_i$ and $\cup P_i$ merely consists of some constants, which make no differences in the CAD computation. So we can estimate the growth of the ``size'' of the projection polynomials by estimating the ``size'' of $T_i$. 

\vspace{-1mm}
\begin{lemma}\label{lem:T_m2}
For a variable $x_p$, if the set $A_p$ has the  $(m,d)$-property and the set $T_l$ also has this property for each $x_l\in\child(x_p)$, then $T_p$ has the $\big((|\child(x_p)|+1)M,2d^2\big)$-property with $M=\left\lfloor \frac{(m+1)^2}{2}\right\rfloor$.  When $m>1$, the set $T_p$ has the $\big((|\child(x_p)|+1)m^2,2d^2\big)$-property.
\end{lemma}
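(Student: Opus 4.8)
The plan is to exploit the recursive structure of $T_p$ supplied by Definition \ref{def:tx}, writing it as the union of the single set $A_p$ together with the $|\child(x_p)|$ projection sets $\projmc(T_l,x_l)$, one for each child $x_l$ of $x_p$. The core tool will be an elementary union principle for the $(m,d)$-property: if a set $B$ admits a partition into at most $m_B$ subsets of combined degree at most $d_B$, and $C$ admits a partition into at most $m_C$ subsets of combined degree at most $d_C$, then concatenating the two partitions exhibits $B\cup C$ as (at most) $m_B+m_C$ subsets of combined degree at most $\max(d_B,d_C)$; hence $B\cup C$ has the $\big(m_B+m_C,\max(d_B,d_C)\big)$-property. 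The combined degree of any subset is intrinsic to that subset, so placing the two collections of subsets side by side does not inflate any individual combined degree.

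First I would apply this principle to the defining union. By hypothesis $A_p$ has the $(m,d)$-property. Since each $T_l$ with $x_l\in\child(x_p)$ has the $(m,d)$-property, Lemma \ref{lem:m2} guarantees that each $\projmc(T_l,x_l)$ has the $(M,2d^2)$-property with $M=\lfloor (m+1)^2/2\rfloor$. Feeding these $1+|\child(x_p)|$ sets into the union principle, $T_p$ inherits the $\big(m+|\child(x_p)|\,M,\ \max(d,2d^2)\big)$-property.

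Second, I would simplify the two parameters to match the claimed bounds. For the degree, $\max(d,2d^2)=2d^2$ whenever $d\geq1$, which holds for any set of positive combined degree. For the number of subsets I would use $M\geq m$: indeed $(m+1)^2/2\geq m+1$ for $m\geq1$, so $M=\lfloor (m+1)^2/2\rfloor\geq m+1> m$, whence $m+|\child(x_p)|\,M\leq M+|\child(x_p)|\,M=(|\child(x_p)|+1)M$. This yields the first assertion. For the case $m>1$, I would instead invoke the sharper $(m^2,2d^2)$-bound of Lemma \ref{lem:m2}, obtaining for $T_p$ the $\big(m+|\child(x_p)|\,m^2,\,2d^2\big)$-property, and then use $m\leq m^2$ to conclude $m+|\child(x_p)|\,m^2\leq(|\child(x_p)|+1)m^2$.

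The argument is essentially bookkeeping, so I do not expect a deep obstacle; the only point demanding care is the union principle itself—specifically, the observation that merging the partitions of the summands never forces two subsets to combine, so that combined degrees are preserved subset-by-subset while only the counts of subsets add. Once that is granted, everything reduces to the two immediate inequalities $M\geq m$ and $m\leq m^2$, valid for all $m\geq1$.
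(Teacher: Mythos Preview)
Your proposal is correct and follows essentially the same approach as the paper: decompose $T_p$ via Definition~\ref{def:tx}, apply Lemma~\ref{lem:m2} to each $\projmc(T_l,x_l)$, and then combine using the union principle together with $m\le M$ and $d\le 2d^2$. The paper's proof is terser (it simply observes that $A_p$ itself has the $(M,2d^2)$-property and leaves both the union step and the $m>1$ case implicit), whereas you spell out the union bookkeeping and the second case explicitly; the underlying argument is the same.
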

\vspace{-4mm}
\begin{proof}
By Def. \ref{def:tx}, $T_p=A_p\cup \bigcup\limits_{x_l\,\in\,\child(x_p)} \projmc(T_l,x_l)$.
For any $x_l\in \child(x_p)$, the set $\projmc(T_l,x_l)$ has the $(M,2d^2)$-property with $M=\left\lfloor \frac{(m+1)^2}{2}\right\rfloor$ by Lemma \ref{lem:m2}. And $A_p$ has the  $(m,d)$-property. Since $m\leq M$ and $d<2d^2$, $A_p$ also has the  $(M,2d^2)$-property.
\end{proof}
\vspace{-5mm}
\begin{table}[htbp]
    \centering
    \caption{Growth of the ``size'' of the polynomial sets in CAD projection with perfect elimination ordering. }\label{tab:cad_pre_growth}
    \vspace{-3mm}
    \begin{tabular}{c c c}
        Height & Number &  Degree  \\
        \hline
        \hline
        $0$      & $m$                  & $d$ \\
        \hline
        $1$      & $(w+1)M$          & $2d^2$\\
        $2$      & $(w+1)^3M^2$   & $8d^4$\\
        $3$      & $(w+1)^7M^4$ & $128d^8$\\
        $\vdots$ & $\vdots$             & $\vdots$\\
        $r$      & $(w+1)^{2^r-1}M^{2^{r-1}}$ & $2^{2^r-1}d^{2^r}$\\
        $\vdots$ & $\vdots$             & $\vdots$\\
        $h$      & $(w+1)^{2^h-1}M^{2^{h-1}}$ & $2^{2^h-1}d^{2^h}$\\
        
    \end{tabular}
    
\end{table}
\vspace{-3mm}

 As in Section \ref{section:com_general}, we apply Lemma \ref{lem:T_m2} recursively to estimate the growth of the ``size'' of the sets $T_i$. The results are shown in Table \ref{tab:cad_pre_growth} with $M=\left\lfloor \frac{(m+1)^2}{2}\right\rfloor$ and $w=\max_{1\leq l \leq n}(|\child(x_l)|)$ therein. This is how one reads the table: a triple $(h',m',d')$ shown in a row means that for any node $x_i$ with height $h'$ in $T$, the set $T_i$ has the $(m',d')$-property. The main difference of this table from Table \ref{tab:cad_general_growth} is that we replace the number of variables therein by the height of the nodes in the perfect elimination tree. Rigorously, we have

\begin{theorem}\label{ttttthm2}
Suppose the set $A_l$ has the $(m,d)$-property for any $1\leq l\leq n$. Then for every internal node  $x_i$ with height $h_i$, the set $T_i$ has the $$\big((w+1)^{2^{h_i}-1}M^{2^{h_i-1}},2^{2^{h_i}-1}d^{2^{h_i}}\big)\text{-property}$$ with $w=\max_{1\leq l \leq n}(|\child(x_l)|)$ and $M=\left\lfloor \frac{(m+1)^2}{2}\right\rfloor$. Note that for a leaf node $x_l$, the set $T_l=A_l$ has the $(m,d)$-property by assumption.
\end{theorem}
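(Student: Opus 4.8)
The plan is to prove the claim by induction on the height $h_i$ of the node $x_i$ in the elimination tree $T$, using Lemma \ref{lem:T_m2} as the single recursive step. The essential preliminary observation is a \emph{monotonicity} property of the $(m,d)$-property: if a set has the $(m',d')$-property and $m'\leq m''$, $d'\leq d''$, then it also has the $(m'',d'')$-property, since any partition witnessing the former witnesses the latter. Writing $N(r)=(w+1)^{2^r-1}M^{2^{r-1}}$ and $D(r)=2^{2^r-1}d^{2^r}$ for the two entries of row $r$ in Table \ref{tab:cad_pre_growth}, I would first record that $N$ and $D$ are non-decreasing in $r$ (as $w+1,M,d\geq 1$), and that $N(1)=(w+1)M\geq m$ and $D(1)=2d^2\geq d$; hence every $A_l$, which has the $(m,d)$-property, also has the $\big(N(r),D(r)\big)$-property for all $r\geq 1$.

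For the base case $h_i=1$, the children of $x_i$ are leaves, so each $T_l=A_l$ has the $(m,d)$-property, and $A_i$ has the $(m,d)$-property as well. Applying Lemma \ref{lem:T_m2} and bounding $|\child(x_i)|+1\leq w+1$ gives the $\big((w+1)M,2d^2\big)$-property, exactly row $1$ of the table.

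For the inductive step, suppose the claim holds for every internal node of height strictly less than $h$ (and recall every leaf has the $(m,d)$-property), and let $x_i$ be an internal node of height $h\geq 2$. Every child $x_l$ of $x_i$ has height $h_l\leq h-1$, so by the inductive hypothesis (or the leaf case together with the preliminary observation) the set $T_l$ has the $\big(N(h_l),D(h_l)\big)$-property, hence by monotonicity the common $\big(N(h-1),D(h-1)\big)$-property; likewise $A_i$ has this property. Since $h-1\geq 1$ forces $N(h-1)\geq M\geq 2>1$, I would invoke the ``$m>1$'' form of Lemma \ref{lem:T_m2} with $m'=N(h-1)$ and $d'=D(h-1)$, obtaining that $T_i$ has the $\big((|\child(x_i)|+1)\,N(h-1)^2,\,2\,D(h-1)^2\big)$-property. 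The proof then closes by the exponent arithmetic $(|\child(x_i)|+1)N(h-1)^2\leq (w+1)\big[(w+1)^{2^{h-1}-1}M^{2^{h-2}}\big]^2=(w+1)^{2^h-1}M^{2^{h-1}}$ and $2D(h-1)^2=2\big[2^{2^{h-1}-1}d^{2^{h-1}}\big]^2=2^{2^h-1}d^{2^h}$, which are precisely $N(h)$ and $D(h)$.

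The step I expect to be the main obstacle is the uniformization of the children: Lemma \ref{lem:T_m2} requires $A_p$ and all the $T_l$ to share a single $(m,d)$-property, whereas the children of $x_i$ generally sit at different heights and therefore carry different properties coming from the induction. The monotonicity observation is what reconciles this, letting me inflate every child's property, as well as that of $A_i$, up to the worst case $\big(N(h-1),D(h-1)\big)$ before applying the lemma; verifying that the resulting exponents collapse exactly to the table entries is the only genuinely computational part.
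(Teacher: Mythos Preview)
Your proposal is correct and follows essentially the same approach as the paper: the paper does not give an explicit proof of Theorem~\ref{ttttthm2} but simply states that one ``applies Lemma~\ref{lem:T_m2} recursively'' to obtain the entries of Table~\ref{tab:cad_pre_growth}, which is exactly the induction on height you spell out. Your treatment is in fact more careful than the paper's, since you make explicit the monotonicity step needed to inflate the $(m,d)$-properties of children at different heights (and of $A_i$) to the common value $\big(N(h-1),D(h-1)\big)$ before invoking the lemma---a point the paper glosses over.
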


The following theorem gives the upper bound of the number of the CAD cells according to Theorem \ref{ttttthm2} and Prop. \ref{prop:SetEqu}.

\begin{theorem}\label{thm:PEO-cells}
If the set $A_l$ has the $(m,d)$-property for every $x_l$, then the number of CAD cells in ${\mathbb R}^n$ is at most
$\prod_{i=1}^{n}(2K_i+1),$
where 
$$
K_i= 
\left\{\begin{array}{cl} 
md ,& \text{if $x_i$ is a leaf node}\\
\big(2(w+1)\big)^{2^{h_i}-1}M^{2^{h_i-1}}d^{2^{h_i}},& \text{ otherwise}
\end{array}\right.
$$
with $w=\max_{1\leq l \leq n}(|\child(x_l)|)$, $M=\left\lfloor \frac{(m+1)^2}{2}\right\rfloor$ and $h_i$ the height of the node $x_i$ in the tree $T$. 
\end{theorem}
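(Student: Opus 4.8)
The plan is to combine the standard CAD cell-counting scheme recalled in Subsection~\ref{section:com_general} with the tree-indexed $(m,d)$-bounds of Theorem~\ref{ttttthm2}, using Prop.~\ref{prop:SetEqu} to transfer the root count from the sets $P_i$ to the sets $T_i$. First I would recall how the cells are counted in the lifting phase. The CAD is built fiber by fiber: over each cell of the CAD of $\R^{i-1}$, the number of cells produced in the $x_i$-direction is at most $2r_i+1$, where $r_i$ bounds the number of real roots (in $x_i$) of the polynomials of $P_i$ after the lower variables are specialized to a point of the base cell. Only the polynomials that actually contain $x_i$ can contribute such roots, since the others specialize to nonzero constants; hence $r_i$ is governed by the set $\{f\in P_i\;|\;x_i\in\var(f)\}$, and the total number of cells of the CAD of $\R^n$ is at most $\prod_{i=1}^{n}(2r_i+1)$.

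Next I would bound each $r_i$. By Prop.~\ref{prop:SetEqu} we have $\{f\in P_i\;|\;x_i\in\var(f)\}=\{f\in T_i\;|\;x_i\in\var(f)\}$, so the roots relevant at level $i$ are exactly those of this subset of $T_i$; the constants by which $\bigcup_i P_i$ and $\bigcup_i T_i$ differ contribute no roots, so the replacement is harmless. A subset of a set with the $(m,d)$-property again has the $(m,d)$-property (just restrict the defining partition), and if a univariate set has the $(m',d')$-property then the product of its members has degree at most $m'd'$ and hence at most $m'd'$ real roots. Applying Theorem~\ref{ttttthm2}: when $x_i$ is a leaf, $T_i=A_i$ has the $(m,d)$-property and $r_i\le md=K_i$; when $x_i$ is internal of height $h_i$, the set $T_i$ has the $\big((w+1)^{2^{h_i}-1}M^{2^{h_i-1}},\,2^{2^{h_i}-1}d^{2^{h_i}}\big)$-property, whence
$$r_i\le (w+1)^{2^{h_i}-1}M^{2^{h_i-1}}\cdot 2^{2^{h_i}-1}d^{2^{h_i}}=\big(2(w+1)\big)^{2^{h_i}-1}M^{2^{h_i-1}}d^{2^{h_i}}=K_i,$$
using the identity $2^{2^{h_i}-1}(w+1)^{2^{h_i}-1}=\big(2(w+1)\big)^{2^{h_i}-1}$. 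Substituting $r_i\le K_i$ into $\prod_{i=1}^{n}(2r_i+1)$ then yields the claimed bound $\prod_{i=1}^{n}(2K_i+1)$.

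The step deserving the most care, and the main obstacle, is the justification that the roots relevant to the lifting at level $i$ are controlled by $\{f\in T_i\;|\;x_i\in\var(f)\}$ rather than by all of $P_i$: this is exactly where Prop.~\ref{prop:SetEqu} is indispensable, and it is what allows the bound to depend on the height $h_i$ in the elimination tree $T$ instead of on the ambient number of variables $n$, which is the whole improvement over Theorem~\ref{thm:general-cells}. Beyond this, one must only check two routine points: that specialization of the lower variables can only decrease the $x_i$-degree, so that $md$ and $K_i$ remain valid upper bounds even in degenerate fibers, and that passing to the subset of polynomials containing $x_i$ never worsens the $(m,d)$-property inherited from $T_i$.
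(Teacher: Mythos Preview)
Your proposal is correct and follows exactly the route the paper indicates: it derives the bound ``according to Theorem~\ref{ttttthm2} and Prop.~\ref{prop:SetEqu}'' together with the standard cell-counting argument from Subsection~\ref{section:com_general}. The paper itself gives no further details beyond that attribution, so your write-up is in fact a fleshed-out version of the intended proof.
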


Comparing Table \ref{tab:cad_pre_growth} with Table \ref{tab:cad_general_growth}, we can see some significant differences. The first one is that the $(m,d)$-property is for $A$ in Table \ref{tab:cad_general_growth}, but it is for the subsets $A_l\subset A$ in Table \ref{tab:cad_pre_growth}. This means Table \ref{tab:cad_pre_growth} (hence Theorem \ref{thm:PEO-cells}) may accept a smaller pair $(m,d)$ than Table \ref{tab:cad_general_growth} (\emph{resp}. Theorem \ref{thm:general-cells}) does. The second difference is the quantity in the double-exponent: one is the number of variables while the other is the height of the node. In general, the difference between these two is not big. But in the case of sparsity, the latter estimate will be much smaller than the former one. This is consistent with our guess at the very beginning. Also, there is a double exponent with $w+1\geq2$ as the base in the new estimate, but note that $w$ describes the number of branches and the height is lower if there are more branches, and vice versa. So this won't be too bad for the new estimate.
\section{Experiments}\label{section:exp}

In this section, we compare the performance of the CAD computation via perfect elimination orderings and via other random variable orderings by testing some polynomial sets with variable sparsity.
As explained before, computing CAD projection via a perfect elimination ordering preserves the chordal structure (as well as the pattern of sparsity embedded in it). This is how we exploit the variable sparsity in the CAD computation. And because of this, the CAD computation via a perfect elimination ordering usually performs better on the examples with variable sparsity.

All tests were conducted on 6-Core Intel Core i7-8750H@2.20GHz with 32GB of memory and ARCH LINUX SYSTEM. 

\subsection{The Impact of the Tree Height}

According to the result in Section \ref{section:complexity}, the new complexity analysis for the CAD computation via a perfect elimination ordering has two main differences from the original one for the CAD computation via a general variable ordering: one is the new $(m,d)$-property and the other is the height of the elimination tree introduced in the formulae. Among them, the impact of the tree height on the new complexity  is particularly noteworthy, and the height of the elimination trees with respect to different perfect elimination orderings  of the same chordal structure can be  different. We recommend not only the perfect elimination orderings as the variable orderings used in the CAD projection, but also the ones among them which result in perfect elimination trees of the minimal height. The following example shows how this can be used to reduce the computation.
\vspace{-1mm}

\begin{example}\label{ex:F1}

The following polynomial set forms a lattice reachablility problem which is described in \cite{Diaconis1998}:
\vspace{-0.5mm}
\[
    F_n=\{x_kx_{k+3}-x_{k+1}x_{k+2}\;|\;k=1,2,\ldots,n-3\}, 4\leq n\in \Z.
    \vspace{-0.5mm}
\]
The associated graph $\mathscr{G}(F_n)$ of the set $F_n$ is shown in Fig. \ref{fig:F1} and it is obviously already a chordal structure of $F_n$. 
\vspace{-1mm}
\begin{figure}[htbp]
\centering
\begin{tikzpicture}[scale = 0.35,line cap=round,line join=round,>=triangle 45,x=1cm,y=1cm]
\clip(0.9,-0.1) rectangle (23.1,6.1);
\draw [line width=1pt] (2,1) circle (1cm);
\draw [line width=1pt] (2,5) circle (1cm);
\draw [line width=1pt] (6,1) circle (1cm);
\draw [line width=1pt] (6,5) circle (1cm);
\draw [line width=1pt] (10,5) circle (1cm);
\draw [line width=1pt] (10,1) circle (1cm);
\draw [line width=1pt] (11,1)-- (11.4,1);
\draw [fill=black] (11.7,1) circle (1pt);
\draw [fill=black] (12,1) circle (1pt);
\draw [fill=black] (12.3,1) circle (1pt);
\draw [line width=1pt] (12.6,1)-- (13,1);
\draw [line width=1pt] (2,4)-- (2,2);
\draw [line width=1pt] (3,1)-- (5,1);
\draw [line width=1pt] (3,5)-- (5,5);
\draw [line width=1pt] (6,2)-- (6,4);
\draw [line width=1pt] (2.7031777104103822,1.711014140212424)-- (5.230201993828059,4.361712423986097);
\draw [line width=1pt] (2.7355160128001135,4.322492660619368)-- (5.378178567264243,1.7831590552310879);
\draw [line width=1pt] (7,5)-- (9,5);
\draw [line width=1pt] (7,1)-- (9,1);
\draw [line width=1pt] (9.999501243850824,1.9999998756211441)-- (10,4);
\draw [line width=1pt] (6.804385809896945,4.405892712688663)-- (9.454529808398158,1.8381302226228633);
\draw [line width=1pt] (6.697391092358913,1.7166907731361147)-- (9.270509847045185,4.316008686647293);
\draw [line width=1pt] (12.6,5)-- (13,5);
\draw [fill=black] (11.7,5) circle (1pt);
\draw [fill=black] (12,5) circle (1pt);
\draw [fill=black] (12.3,5) circle (1pt);
\draw [line width=1pt] (11,5)-- (11.4,5);
\draw [line width=1pt] (14,1) circle (1cm);
\draw [line width=1pt] (14,5) circle (1cm);
\draw [line width=1pt] (18,1) circle (1cm);
\draw [line width=1pt] (18,5) circle (1cm);
\draw [line width=1pt] (22,5) circle (1cm);
\draw [line width=1pt] (22,1) circle (1cm);
\draw [line width=1pt] (14,4)-- (14,2);
\draw [line width=1pt] (15,1)-- (17,1);
\draw [line width=1pt] (15,5)-- (17,5);
\draw [line width=1pt] (18,2)-- (18,4);
\draw [line width=1pt] (14.838901252543753,1.544283647081669)-- (17.3214044611064,4.26548785265749);
\draw [line width=1pt] (14.816117806013656,4.422114434591538)-- (17.451705879051282,1.8362855714007464);
\draw [line width=1pt] (19,5)-- (21,5);
\draw [line width=1pt] (19,1)-- (21,1);
\draw [line width=1pt] (22.008753456384277,1.9999616877667505)-- (22,4);
\draw [line width=1pt] (18.775338262878122,4.36845381948971)-- (21.254521968397672,1.666530197664231);
\draw [line width=1pt] (18.75567912923812,1.6549420231088512)-- (21.2515492585613,4.336809614333967);
\draw [line width=1pt] (2.7355160128001135,4.322492660619368)-- (9.454529808398158,1.8381302226228633);
\draw [line width=1pt] (14.816117806013656,4.422114434591538)-- (21.254521968397672,1.666530197664231);
\begin{scriptsize}
\draw[color=black] (2,1) node[scale =1]{$x_1$};
\draw[color=black] (2,5) node[scale =1] {$x_2$};
\draw[color=black] (6,1) node[scale =1] {$x_3$};
\draw[color=black] (6,5) node[scale =1] {$x_4$};
\draw[color=black] (10,5) node[scale =1] {$x_6$};
\draw[color=black] (10,1) node[scale =1] {$x_5$};
\draw[color=black] (14,1) node[scale =1] {$x_{n-5}$};
\draw[color=black] (14,5) node[scale =1] {$x_{n-4}$};
\draw[color=black] (18,1) node[scale =1] {$x_{n-3}$};
\draw[color=black] (18,5) node[scale =1] {$x_{n-2}$};
\draw[color=black] (22,5) node[scale =1] {$x_n$};
\draw[color=black] (22,1) node[scale =1] {$x_{n-1}$};
\draw [fill=black] (11.20467718494546,2.845699927087352) circle (1.5pt);
\draw [fill=black] (12.00658910345555,2.845699927087352) circle (1.5pt);
\draw [fill=black] (12.765154431775905,2.8240266319924845) circle (1.5pt);
\end{scriptsize}
\end{tikzpicture}
\vspace{-2mm}
    \caption{ Associated graph $\mathscr{G}(F_n)$ of  $F_n$ in Example \ref{ex:F1}.}
    \label{fig:F1}
    \vspace{-2mm}
\end{figure}

We consider two different perfect elimination orderings 
\begin{align*}
    l_n^1&:x_1>x_2>x_3>\ldots>x_{n-1}>x_{n},\text{ and}\\
    l_n^2&:x_1>\dots>x_{\left\lceil\frac{n-3}{2}\right\rceil}>x_{n}>\ldots>x_{\left\lceil \frac{n-1}{2}\right\rceil}.
\end{align*}
The figures below (Fig. \ref{fig:ln1} and Fig. \ref{fig:ln2}) show the elimination trees with respect to the ordering $l_n^1$ and $l_n^2$, respectively.
\vspace{-5mm}
\begin{figure}[htbp]
\centering
\subfigure{
\begin{minipage}[htbp]{0.35\linewidth}
\centering
\rotatebox{-90}{
\begin{tikzpicture}[scale = 0.2,line cap=round,line join=round,>=triangle 45,x=1cm,y=1cm]
\clip(-8.2,-4) rectangle (12.2,4);
\draw [line width=1pt] (-4,0) circle (2cm);
\draw [line width=1pt] (10,0) circle (2cm);
\draw [-latex,line width=0.8pt] (-2,0) -- (1.36,0);
\draw [-latex,line width=0.8pt] (4.95,0) -- (8,0);

\begin{scriptsize}
\draw[color=black] (-4,0) node[scale = 1] {\rotatebox{90}{$x_1$}};
\draw[color=black] (10,0) node[scale = 1] {\rotatebox{90}{$x_n$}};
\draw [fill=black] (2.2575,0) circle (2.5pt);
\draw [fill=black] (3.155,0) circle (2.5pt);
\draw [fill=black] (4.0525,0) circle (2.5pt);
\end{scriptsize}
\end{tikzpicture}}
\caption{The elimination tree with respect to  $l_n^1$.}
\label{fig:ln1}
\end{minipage}}
\subfigure{
\begin{minipage}[htbp]{0.6\linewidth}
\centering
\rotatebox{0}{
\begin{tikzpicture}[scale = 0.4,line cap=round,line join=round,>=triangle 45,x=1cm,y=1cm]
\clip(-2.6,-4.6) rectangle (10.8,5.4);
\draw [line width=1pt] (0,0.5) circle (1cm);
\draw [line width=1pt] (3.5,0.5) circle (1cm);
\draw [line width=1pt] (7,0.5) circle (1cm);
\draw [line width=1pt] (0,4) circle (1cm);
\draw [line width=1pt] (0,-3) circle (1cm);
\draw [line width=1pt] (7,4) circle (1cm);
\draw [line width=1pt] (7,-3) circle (1cm);
\draw [-latex,line width=0.7pt] (0,3) -- (0,1.5);
\draw [-latex,line width=0.7pt] (0,-2) -- (0,-0.5);
\draw [-latex,line width=0.7pt] (1,0.5) -- (2.5,0.5);
\draw [-latex,line width=0.7pt] (4.5,0.5) -- (6,0.5);
\draw [-latex,line width=0.7pt] (6,4) -- (4.5,4);
\draw [-latex,line width=0.7pt] (2.5,4) -- (1,4);
\draw [-latex,line width=0.7pt] (6,-3) -- (4.5,-3);
\draw [-latex,line width=0.7pt] (2.5,-3) -- (1,-3);
\draw[color=black] (0,0.5) node[scale =0.7]
{\rotatebox{0}{$x_{\left\lceil\frac{n+3}{2}\right\rceil}$}};
\draw[color=black] (3.5,0.5) node[scale =0.7]
{\rotatebox{0}{$x_{\left\lceil\frac{n+1}{2}\right\rceil}$}};
\draw[color=black] (7,0.5) node[scale =0.7]
{\rotatebox{0}{$x_{\left\lceil\frac{n-1}{2}\right\rceil}$}};
\draw[color=black] (0,4) node[scale =0.7]
{\rotatebox{0}{$x_{\left\lceil\frac{n-3}{2}\right\rceil}$}};
\draw[color=black] (0,-3) node[scale =0.7]
{\rotatebox{0}{$x_{\left\lceil\frac{n+5}{2}\right\rceil}$}};
\draw[color=black] (7,4) node[scale =0.7]
{\rotatebox{0}{$x_1$}};
\draw[color=black] (7,-3) node[scale =0.7]
{\rotatebox{0}{$x_n$}};
\draw [fill=black] (3,4) circle (1.25pt);
\draw [fill=black] (3.5,4) circle (1.25pt);
\draw [fill=black] (4,4) circle (1.25pt);
\draw [fill=black] (3,-3) circle (1.25pt);
\draw [fill=black] (3.5,-3) circle (1.25pt);
\draw [fill=black] (4,-3) circle (1.25pt);
\end{tikzpicture}}
\vspace{-6mm}
\caption{The elimination tree with respect to $l_n^2$.}
\label{fig:ln2}
\end{minipage}}
\end{figure}
\vspace{-3mm}

Note that the height of the tree in Fig. \ref{fig:ln1} is $n-1$ and the height of the tree in Fig. \ref{fig:ln2} is $\left\lceil\frac{n+1}{2}\right\rceil$. Hence, from the complexity analysis in the last section, $l_n^2$ can be a better variable ordering.
We compute the CAD's for the polynomial set $F_n$ by the CAD tools in Mathematica 12 
and in Maple 2020. 
In Mathematica 12, we use the command ``CylindricalDecomposition'' \cite{strzebonski2006cylindrical} for the formula $\bigwedge_{f\in F_n}f\neq 0$ while in  Maple 2020, we use the command ``PartialCylindricalAlgebraicDecomposition'' \cite{yang2001complete} for the polynomial $\prod F_n$ instead. The results are shown in Table \ref{tab:F1}. We see that the runtime with respect to the ordering $l_n^2$ with a smaller elimination tree height is much shorter. Also, less sample points are obtained with respect to the ordering $l_n^2$ (For this example, these two tools give the same number of sample points in each case shown in the table). Both of these are well consistent with the previous complexity analysis.
\end{example}
         
  
\begin{table}[h]
\vspace{-2mm}
    \centering
      \caption{Comparing the runtime and the number of sample points  of the CAD computation via the perfect elimination orderings $l_n^1$ and $l_n^2$ with different elimination tree height (``OM'' means out of memory).}
      \vspace{-2.5mm}
       \scalebox{0.89}{
    \begin{tabular}{|c||c|c|c| c c c|}
     \hline
          $n$   &  & $l_n^1$ &$l_n^2$ &\multicolumn{3}{c|}{Random Orderings} \\
        \hline
\multirow{3}*{8}&Mathematica 12 &0.32s&0.17s& 18.13s &28.17s &1.87s\\
        \cline{2-7}
        &Maple 2020 &0.09s&0.04s&8.80s&5.97s&0.27s\\     
        \cline{2-7}
        &\#sample &576&256&11520&16000&1728\\     
         \hline
         
         \multirow{3}*{9}&Mathematica 12 &1.04s&0.33s&7.95s&4.66s &43.16s\\
        \cline{2-7}
        &Maple 2020 &0.28s&0.08s&1.32s&1.0s&24.86s\\   
         \cline{2-7}
        &\#sample &1728&512&5184&3456&23552\\  
         \hline
           \multirow{3}*{13}&Mathematica 12 &122s&22.4s& >2h &>2h &>2h\\
        \cline{2-7}
        &Maple 2020 &397s&29.1s&>2h&OM&OM\\  
        \cline{2-7}
        &\#sample &139968&41472&--&--&--\\  
         \hline
           \multirow{3}*{14}&Mathematica 12 &405s&74.3s&>2h &>2h &>2h\\
        \cline{2-7}
        &Maple 2020 &1744s&329s&>2h&OM&>2h\\     
         \cline{2-7}
        &\#sample &419904&124416&--&--&--\\  
         \hline
    \end{tabular}}
  
    \label{tab:F1}
    \vspace{-3mm}
\end{table}
\subsection{The Impact of Different Chordal Completions}
In practice, the associated graph of a sparse polynomial system is not necessarily a chordal  graph. In this case we need to compute one of its chordal completions together with a perfect elimination ordering of that completion. Conversely, from any variable ordering one can also construct a corresponding chordal completion via the elimination game. For the case where the associated graph is not chordal, we recommend those variable orderings for CAD computation, which, via the elimination game, result in minimal chordal completions. The following example shows the impact of different chordal completions on the CAD computation.

\begin{example}\hspace{-0.4mm}(\cite{cifuentes2017chordal})
Consider the following polynomial set:
\[
I^{n_1,n_2}=\bigcup_{0\leq i<n_1,0\leq j<n_2}
\left\{
\begin{array}{c} 
    U_{i,j}R_{i,j+1}-R_{i,j}U_{i+1,j},\\
    D_{i,j+1}R_{i,j}-R_{i,j+1}D_{i+1,j+1}, \\
    D_{i+1,j+1}L_{i+1,j}-L_{i+1,j+1}D_{i,j+1},\\
    U_{i+1,j}L_{i+1,j+1}-L_{i+1,j}U_{i,j}
\end{array}
\right\}.
\]

In Table \ref{tab:In1n2} we show the runtime and the numbers of sample points of the CAD computation for $I^{n_1,n_2}$. Therein, the symbol ``$n$'' denotes the number of variables and ``$l_p$'' denotes the variable orderings which, via the elimination game, result in minimal chordal completions of the associated graph of $I^{n_1,n_2}$. We know that each of these variable orderings is a perfect elimination ordering of the corresponding completion. The symbol ``$l_{sv}$'' denotes the variable orderings given by the Maple command ``SuggestVariableOrder'' while  ``$l_{R}$'' denotes some of the random variable orderings. The letter ``$t$'' denotes the runtime of the command ``PartialCylindricalAlgebraicDecomposition'' for the polynomial $\prod I^{n_1,n_2}$ in Maple 2020, while ``$m$'' denotes the number of sample points. The number $d$ is defined by $d=1-{|G|}/{|\bar{G}|}$ with $|G|$ the number of the edges of the  associated graph  and $|\bar{G}|$ the number of the edges of its chordal completion obtained by playing the elimination game via the corresponding variable ordering in the table. The number $d$ characterizes the difference between the associated graph and its chordal completion.

\begin{table}[htbp]
    \centering
    \caption{The CAD computation via different variable orderings resulting different  chordal completions.}
    \vspace{-2mm}
    \label{tab:In1n2}
    \begin{tabular}{|c|c|c|c|c|c|c|}
       \hline
         $(n_1 ,n_2)$&$n$& & $l_p$ &$l_{sv}$&\multicolumn{2}{c|}{$l_{R}$}\\
       \hline
       \multirow{3}*{(1,1)}& \multirow{3}*{8}& $t$&0.04s  & 0.14s  & 0.18s & 0.12s \\
       \cline{3-7}
      & & $d$& 0.17 & 0.17 & 0.23 &  0.20\\
      \cline{3-7}
        & & $m$& 288& 864 &1153 &  864\\
       \hline
       \multirow{3}*{(2,1)}&\multirow{3}*{14}&$t$& 361s  & 780s  &OM  & 1132s \\
       \cline{3-7}
      & & $d$& 0.17 &0.21 & 0.50 &  0.35\\
       \cline{3-7}
        & & $m$& 248832& 248832 &-- &  248832\\
     \hline
       \multirow{3}*{ (1,2)}&\multirow{3}*{14}&$t$& 581s & 712s  &1147s  & 2868s \\
       \cline{3-7}
      & & $d$& 0.17&0.14& 0.46 & 0.51 \\
       \cline{3-7}
        & & $m$& 248832& 248832 &248832 &  248832\\
     \hline

    \end{tabular}
\vspace{-2mm}
\end{table}

The data in Table \ref{tab:In1n2} show that, roughly speaking, the smaller the number $d$ is, the faster the computation may be done. In other words, a variable ordering could be better for the CAD projection if the difference between the corresponding chordal completion and the original associated graph is smaller. On the other hand, the relations between the number of sample points and $d$ seem not easy to grasp. One thing is clear that, for this example, the numbers of sample points obtained in the cases with smaller $d$ are less than or equal to the ones obtained in the other cases with bigger $d$. A somewhat surprising fact is that, when $n=14$, the number of sample points keeps unchanged under several different variable orderings.
\end{example}


\section{Conclusion}\label{sec:conclusion}
In this paper, we indicate for the first time that classical CAD projection algorithms preserve the chordal structure of a polynomial set, if the projection process is done in accordance with a perfect elimination ordering. More importantly, we provide a complexity analysis for the CAD algorithms executed with respect to a perfect elimination ordering by giving a new $(m,d)$-property which characterizes the ``size'' of the corresponding full set of projection polynomials. Combining these theoretical results with the experiments, we find that: (i) When the polynomial set is sparse and of the chordal structure, it can be better to compute the CAD via the perfect elimination orderings which result in perfect elimination trees of the minimal height. (ii) When the polynomial set is not of the chordal structure but still nearly chordal (and also sparse),  it could be better to use the variable orderings that result in minimal chordal completions of the original system. The core idea is that we use the perfect elimination ordering during the computation so that the sparsity embedded in the (nearly) chordal structure can pass from the original polynomial set to all the successive polynomial sets produced later on. Both the complexity analysis and the experimental results show that this idea really helps reduce the size of the full set of projection polynomials in many cases.

\vspace{-2mm}
\section*{Acknowledgments}
The work has been supported by the NSFC under grant No. 61732001.

\vspace{-2mm}
\bibliographystyle{ACM-Reference-Format}
\bibliography{chordalcad}


\begin{thebibliography}{32}


\ifx \showCODEN    \undefined \def \showCODEN     #1{\unskip}     \fi
\ifx \showDOI      \undefined \def \showDOI       #1{#1}\fi
\ifx \showISBNx    \undefined \def \showISBNx     #1{\unskip}     \fi
\ifx \showISBNxiii \undefined \def \showISBNxiii  #1{\unskip}     \fi
\ifx \showISSN     \undefined \def \showISSN      #1{\unskip}     \fi
\ifx \showLCCN     \undefined \def \showLCCN      #1{\unskip}     \fi
\ifx \shownote     \undefined \def \shownote      #1{#1}          \fi
\ifx \showarticletitle \undefined \def \showarticletitle #1{#1}   \fi
\ifx \showURL      \undefined \def \showURL       {\relax}        \fi
\providecommand\bibfield[2]{#2}
\providecommand\bibinfo[2]{#2}
\providecommand\natexlab[1]{#1}
\providecommand\showeprint[2][]{arXiv:#2}

\bibitem[\protect\citeauthoryear{Berry, Blair, Heggernes, and Peyton}{Berry
  et~al\mbox{.}}{2004}]%
        {DBLP:journals/algorithmica/BerryBHP04}
\bibfield{author}{\bibinfo{person}{A. Berry}, \bibinfo{person}{J.~R.~S. Blair},
  \bibinfo{person}{P. Heggernes}, {and} \bibinfo{person}{B.~W. Peyton}.}
  \bibinfo{year}{2004}\natexlab{}.
\newblock \showarticletitle{Maximum Cardinality Search for Computing Minimal
  Triangulations of Graphs}.
\newblock \bibinfo{journal}{\emph{Algorithmica}} \bibinfo{volume}{39},
  \bibinfo{number}{4} (\bibinfo{year}{2004}), \bibinfo{pages}{287--298}.
\newblock


\bibitem[\protect\citeauthoryear{Blair and Peyton}{Blair and Peyton}{1993}]%
        {Blair1993}
\bibfield{author}{\bibinfo{person}{J.~R.~S. Blair} {and} \bibinfo{person}{B.
  Peyton}.} \bibinfo{year}{1993}\natexlab{}.
\newblock \showarticletitle{An Introduction to Chordal Graphs and Clique
  Trees}. In \bibinfo{booktitle}{\emph{Graph Theory and Sparse Matrix
  Computation}}. \bibinfo{publisher}{Springer New York},
  \bibinfo{pages}{1--29}.
\newblock
\showISBNx{978-1-4613-8369-7}


\bibitem[\protect\citeauthoryear{Bradford, Davenport, England, and
  Wilson}{Bradford et~al\mbox{.}}{2013}]%
        {bradford2013optimising}
\bibfield{author}{\bibinfo{person}{R. Bradford}, \bibinfo{person}{J.H.
  Davenport}, \bibinfo{person}{M. England}, {and} \bibinfo{person}{D. Wilson}.}
  \bibinfo{year}{2013}\natexlab{}.
\newblock \showarticletitle{Optimising Problem Formulation for Cylindrical
  Algebraic Decomposition}. In \bibinfo{booktitle}{\emph{International
  Conference on Intelligent Computer Mathematics}}. Springer,
  \bibinfo{pages}{19--34}.
\newblock


\bibitem[\protect\citeauthoryear{Bradford, Davenport, England, McCallum, and
  Wilson}{Bradford et~al\mbox{.}}{2016}]%
        {DBLP:journals/jsc/BradfordDEMW16}
\bibfield{author}{\bibinfo{person}{R.~J. Bradford}, \bibinfo{person}{J.~H.
  Davenport}, \bibinfo{person}{M. England}, \bibinfo{person}{S. McCallum},
  {and} \bibinfo{person}{D.~J. Wilson}.} \bibinfo{year}{2016}\natexlab{}.
\newblock \showarticletitle{Truth Table Invariant Cylindrical Algebraic
  Decomposition}.
\newblock \bibinfo{journal}{\emph{Journal of Symbolic Computation}}
  \bibinfo{volume}{76} (\bibinfo{year}{2016}), \bibinfo{pages}{1--35}.
\newblock


\bibitem[\protect\citeauthoryear{Brown}{Brown}{2001}]%
        {DBLP:journals/jsc/Brown01a}
\bibfield{author}{\bibinfo{person}{C.~W. Brown}.}
  \bibinfo{year}{2001}\natexlab{}.
\newblock \showarticletitle{Improved Projection for Cylindrical Algebraic
  Decomposition}.
\newblock \bibinfo{journal}{\emph{Journal of Symbolic Computation}}
  \bibinfo{volume}{32}, \bibinfo{number}{5} (\bibinfo{year}{2001}),
  \bibinfo{pages}{447--465}.
\newblock


\bibitem[\protect\citeauthoryear{Brown}{Brown}{2004}]%
        {brown2004companion}
\bibfield{author}{\bibinfo{person}{C.~W. Brown}.}
  \bibinfo{year}{2004}\natexlab{}.
\newblock \showarticletitle{Companion to the Tutorial: Cylindrical Algebraic
  Decomposition}.
\newblock \bibinfo{journal}{\emph{Presented at ISSAC'04}}
  (\bibinfo{year}{2004}).
\newblock


\bibitem[\protect\citeauthoryear{Brown}{Brown}{2015}]%
        {brown2015open}
\bibfield{author}{\bibinfo{person}{C.~W. Brown}.}
  \bibinfo{year}{2015}\natexlab{}.
\newblock \showarticletitle{Open Non-Uniform Cylindrical Algebraic
  Decompositions}. In \bibinfo{booktitle}{\emph{Proc. ISSAC'15}}.
  \bibinfo{publisher}{ACM Press}, \bibinfo{pages}{85--92}.
\newblock


\bibitem[\protect\citeauthoryear{Chen}{Chen}{2020}]%
        {chen2020chordality}
\bibfield{author}{\bibinfo{person}{C. Chen}.} \bibinfo{year}{2020}\natexlab{}.
\newblock \showarticletitle{Chordality Preserving Incremental Triangular
  Decomposition and Its Implementation}. In
  \bibinfo{booktitle}{\emph{ICMS'20}}. Springer, \bibinfo{pages}{27--36}.
\newblock


\bibitem[\protect\citeauthoryear{Chen, Maza, Xia, and Yang}{Chen
  et~al\mbox{.}}{2009}]%
        {triCAD}
\bibfield{author}{\bibinfo{person}{C. Chen}, \bibinfo{person}{M.~Moreno Maza},
  \bibinfo{person}{B. Xia}, {and} \bibinfo{person}{L. Yang}.}
  \bibinfo{year}{2009}\natexlab{}.
\newblock \showarticletitle{Computing Cylindrical Algebraic Decomposition via
  Triangular Decomposition}. In \bibinfo{booktitle}{\emph{Proc. ISSAC'09}}.
  \bibinfo{publisher}{ACM Press}, \bibinfo{pages}{95--102}.
\newblock


\bibitem[\protect\citeauthoryear{Chen, Zhu, and Chi}{Chen
  et~al\mbox{.}}{2020}]%
        {chen2020variable}
\bibfield{author}{\bibinfo{person}{C. Chen}, \bibinfo{person}{Z. Zhu}, {and}
  \bibinfo{person}{H. Chi}.} \bibinfo{year}{2020}\natexlab{}.
\newblock \showarticletitle{Variable Ordering Selection for Cylindrical
  Algebraic Decomposition with Artificial Neural Networks}. In
  \bibinfo{booktitle}{\emph{International Congress on Mathematical Software}}.
  Springer, \bibinfo{pages}{281--291}.
\newblock


\bibitem[\protect\citeauthoryear{Cifuentes and Parrilo}{Cifuentes and
  Parrilo}{2016}]%
        {cifuentes2016exploiting}
\bibfield{author}{\bibinfo{person}{D. Cifuentes} {and} \bibinfo{person}{P.~A.
  Parrilo}.} \bibinfo{year}{2016}\natexlab{}.
\newblock \showarticletitle{Exploiting Chordal Structure in Polynomial Ideals:
  A Gr\"{o}bner Bases Approach}.
\newblock \bibinfo{journal}{\emph{SIAM Journal on Discrete Mathematics}}
  \bibinfo{volume}{30}, \bibinfo{number}{3} (\bibinfo{year}{2016}),
  \bibinfo{pages}{1534--1570}.
\newblock


\bibitem[\protect\citeauthoryear{Cifuentes and Parrilo}{Cifuentes and
  Parrilo}{2017}]%
        {cifuentes2017chordal}
\bibfield{author}{\bibinfo{person}{D. Cifuentes} {and} \bibinfo{person}{P.~A.
  Parrilo}.} \bibinfo{year}{2017}\natexlab{}.
\newblock \showarticletitle{Chordal Networks of Polynomial Ideals}.
\newblock \bibinfo{journal}{\emph{SIAM Journal on Applied Algebra and
  Geometry}} \bibinfo{volume}{1}, \bibinfo{number}{1} (\bibinfo{year}{2017}),
  \bibinfo{pages}{73--110}.
\newblock


\bibitem[\protect\citeauthoryear{Collins}{Collins}{1975}]%
        {DBLP:conf/automata/Collins75}
\bibfield{author}{\bibinfo{person}{G.~E. Collins}.}
  \bibinfo{year}{1975}\natexlab{}.
\newblock \showarticletitle{Quantifier Elimination for Real Closed Fields by
  Cylindrical Algebraic Decomposition}. In \bibinfo{booktitle}{\emph{Automata
  Theory and Formal Languages, 2nd {GI} Conference, Kaiserslautern, May 20-23,
  1975}} \emph{(\bibinfo{series}{Lecture Notes in Computer Science},
  Vol.~\bibinfo{volume}{33})}, \bibfield{editor}{\bibinfo{person}{H.~Barkhage}}
  (Ed.). \bibinfo{publisher}{Springer}, \bibinfo{pages}{134--183}.
\newblock


\bibitem[\protect\citeauthoryear{Collins and Hong}{Collins and Hong}{1991}]%
        {collins1991partial}
\bibfield{author}{\bibinfo{person}{G.~E. Collins} {and} \bibinfo{person}{H.
  Hong}.} \bibinfo{year}{1991}\natexlab{}.
\newblock \showarticletitle{Partial Cylindrical Algebraic Decomposition for
  Quantifier Elimination}.
\newblock \bibinfo{journal}{\emph{Journal of Symbolic Computation}}
  \bibinfo{volume}{12}, \bibinfo{number}{3} (\bibinfo{year}{1991}),
  \bibinfo{pages}{299--328}.
\newblock


\bibitem[\protect\citeauthoryear{Diaconis, Eisenbud, and Sturmfels}{Diaconis
  et~al\mbox{.}}{1998}]%
        {Diaconis1998}
\bibfield{author}{\bibinfo{person}{P. Diaconis}, \bibinfo{person}{D. Eisenbud},
  {and} \bibinfo{person}{B. Sturmfels}.} \bibinfo{year}{1998}\natexlab{}.
\newblock \bibinfo{booktitle}{\emph{Lattice Walks and Primary Decomposition}}.
\newblock \bibinfo{publisher}{Birkh{\"a}user Boston},
  \bibinfo{pages}{173--193}.
\newblock


\bibitem[\protect\citeauthoryear{Dolzmann, Seidl, and Sturm}{Dolzmann
  et~al\mbox{.}}{2004}]%
        {dolzmann2004efficient}
\bibfield{author}{\bibinfo{person}{A. Dolzmann}, \bibinfo{person}{A. Seidl},
  {and} \bibinfo{person}{T. Sturm}.} \bibinfo{year}{2004}\natexlab{}.
\newblock \showarticletitle{Efficient Projection Orders for CAD}. In
  \bibinfo{booktitle}{\emph{Proc. ISSAC'04}}. \bibinfo{publisher}{ACM Press},
  \bibinfo{pages}{111--118}.
\newblock


\bibitem[\protect\citeauthoryear{England and Florescu}{England and
  Florescu}{2019}]%
        {england2019comparing}
\bibfield{author}{\bibinfo{person}{M. England} {and} \bibinfo{person}{D.
  Florescu}.} \bibinfo{year}{2019}\natexlab{}.
\newblock \showarticletitle{Comparing Machine Learning Models to Choose the
  Variable Ordering for Cylindrical Algebraic Decomposition}. In
  \bibinfo{booktitle}{\emph{International Conference on Intelligent Computer
  Mathematics}}. Springer, \bibinfo{pages}{93--108}.
\newblock


\bibitem[\protect\citeauthoryear{Han, Dai, Hong, and Xia}{Han
  et~al\mbox{.}}{2017}]%
        {han2017open}
\bibfield{author}{\bibinfo{person}{J. Han}, \bibinfo{person}{L. Dai},
  \bibinfo{person}{H. Hong}, {and} \bibinfo{person}{B. Xia}.}
  \bibinfo{year}{2017}\natexlab{}.
\newblock \showarticletitle{Open Weak CAD and Its Applications}.
\newblock \bibinfo{journal}{\emph{Journal of Symbolic Computation}}
  \bibinfo{volume}{80} (\bibinfo{year}{2017}), \bibinfo{pages}{785--816}.
\newblock


\bibitem[\protect\citeauthoryear{Han, Jin, and Xia}{Han et~al\mbox{.}}{2016}]%
        {han2016proving}
\bibfield{author}{\bibinfo{person}{J. Han}, \bibinfo{person}{Z. Jin}, {and}
  \bibinfo{person}{B. Xia}.} \bibinfo{year}{2016}\natexlab{}.
\newblock \showarticletitle{Proving Inequalities and Solving Global
  Optimization Problems via Simplified CAD Projection}.
\newblock \bibinfo{journal}{\emph{Journal of Symbolic Computation}}
  \bibinfo{volume}{72} (\bibinfo{year}{2016}), \bibinfo{pages}{206--230}.
\newblock


\bibitem[\protect\citeauthoryear{Hong}{Hong}{1990a}]%
        {Hong}
\bibfield{author}{\bibinfo{person}{H. Hong}.} \bibinfo{year}{1990}\natexlab{a}.
\newblock \showarticletitle{An Improvement of the Projection Operator in
  Cylindrical Algebraic Decomposition}. In \bibinfo{booktitle}{\emph{Proc.
  ISSAC'1990}}. \bibinfo{publisher}{ACM Press}, \bibinfo{pages}{261--264}.
\newblock


\bibitem[\protect\citeauthoryear{Hong}{Hong}{1990b}]%
        {hong1990improvements}
\bibfield{author}{\bibinfo{person}{H. Hong}.} \bibinfo{year}{1990}\natexlab{b}.
\newblock \emph{\bibinfo{title}{Improvements in CAD-Based Quantifier
  Elimination}}.
\newblock \bibinfo{thesistype}{Ph.D. Dissertation}. \bibinfo{school}{The Ohio
  State University}.
\newblock


\bibitem[\protect\citeauthoryear{Hong and Din}{Hong and Din}{2012}]%
        {hong2012variant}
\bibfield{author}{\bibinfo{person}{H. Hong} {and} \bibinfo{person}{M.~Safey~El
  Din}.} \bibinfo{year}{2012}\natexlab{}.
\newblock \showarticletitle{Variant Quantifier Elimination}.
\newblock \bibinfo{journal}{\emph{Journal of Symbolic Computation}}
  \bibinfo{volume}{47}, \bibinfo{number}{7} (\bibinfo{year}{2012}),
  \bibinfo{pages}{883--901}.
\newblock


\bibitem[\protect\citeauthoryear{Huang, England, Wilson, Davenport, Paulson,
  and Bridge}{Huang et~al\mbox{.}}{2014}]%
        {huang2014applying}
\bibfield{author}{\bibinfo{person}{Z. Huang}, \bibinfo{person}{M. England},
  \bibinfo{person}{D. Wilson}, \bibinfo{person}{J.~H. Davenport},
  \bibinfo{person}{L.C. Paulson}, {and} \bibinfo{person}{J. Bridge}.}
  \bibinfo{year}{2014}\natexlab{}.
\newblock \showarticletitle{Applying Machine Learning to the Problem of
  Choosing a Heuristic to Select the Variable Ordering for Cylindrical
  Algebraic Decomposition}. In \bibinfo{booktitle}{\emph{International
  Conference on Intelligent Computer Mathematics}}. Springer,
  \bibinfo{pages}{92--107}.
\newblock


\bibitem[\protect\citeauthoryear{McCallum}{McCallum}{1985}]%
        {McCallum85}
\bibfield{author}{\bibinfo{person}{S. McCallum}.}
  \bibinfo{year}{1985}\natexlab{}.
\newblock \emph{\bibinfo{title}{An Improved Projection Operation for
  Cylindrical Algebraic Decomposition}}.
\newblock \bibinfo{thesistype}{Ph.D. Dissertation}. \bibinfo{school}{University
  of Wisconsin-Madison}.
\newblock


\bibitem[\protect\citeauthoryear{McCallum}{McCallum}{1988}]%
        {DBLP:journals/jsc/McCallum88}
\bibfield{author}{\bibinfo{person}{S. McCallum}.}
  \bibinfo{year}{1988}\natexlab{}.
\newblock \showarticletitle{An Improved Projection Operation for Cylindrical
  Algebraic Decomposition of Three-Dimensional Space}.
\newblock \bibinfo{journal}{\emph{Journal Symbolic Computation}}
  \bibinfo{volume}{5}, \bibinfo{number}{1/2} (\bibinfo{year}{1988}),
  \bibinfo{pages}{141--161}.
\newblock


\bibitem[\protect\citeauthoryear{McCallum}{McCallum}{1998}]%
        {McCallum2}
\bibfield{author}{\bibinfo{person}{S. McCallum}.}
  \bibinfo{year}{1998}\natexlab{}.
\newblock \showarticletitle{An Improved Projection Operation for Cylindrical
  Algebraic Decomposition}.
\newblock In \bibinfo{booktitle}{\emph{Quantifier Elimination and Cylindrical
  Algebraic Decomposition}}. \bibinfo{publisher}{Springer},
  \bibinfo{pages}{242--268}.
\newblock


\bibitem[\protect\citeauthoryear{Mou, Bai, and Lai}{Mou et~al\mbox{.}}{2019}]%
        {mou2019chordal}
\bibfield{author}{\bibinfo{person}{C. Mou}, \bibinfo{person}{Y. Bai}, {and}
  \bibinfo{person}{J. Lai}.} \bibinfo{year}{2019}\natexlab{}.
\newblock \showarticletitle{Chordal Graphs in Triangular Decomposition in
  Top-Down Style}.
\newblock \bibinfo{journal}{\emph{Journal of Symbolic Computation}}
  \bibinfo{volume}{102} (\bibinfo{year}{2019}), \bibinfo{pages}{108--131}.
\newblock


\bibitem[\protect\citeauthoryear{Mou and Lai}{Mou and Lai}{2019}]%
        {mou2019chordality}
\bibfield{author}{\bibinfo{person}{C. Mou} {and} \bibinfo{person}{J. Lai}.}
  \bibinfo{year}{2019}\natexlab{}.
\newblock \showarticletitle{On the Chordality of Simple Decomposition in
  Top-Down Style}. In \bibinfo{booktitle}{\emph{International Conference on
  Mathematical Aspects of Computer and Information Sciences}}. Springer,
  \bibinfo{pages}{138--152}.
\newblock


\bibitem[\protect\citeauthoryear{Parter}{Parter}{1961}]%
        {doi:10.1137/1003021}
\bibfield{author}{\bibinfo{person}{S. Parter}.}
  \bibinfo{year}{1961}\natexlab{}.
\newblock \showarticletitle{The Use of Linear Graphs in Gauss Elimination}.
\newblock \bibinfo{journal}{\emph{SIAM $\text{Review}$}} \bibinfo{volume}{3},
  \bibinfo{number}{2} (\bibinfo{year}{1961}), \bibinfo{pages}{119--130}.
\newblock


\bibitem[\protect\citeauthoryear{Strzebo{\'n}ski}{Strzebo{\'n}ski}{2006}]%
        {strzebonski2006cylindrical}
\bibfield{author}{\bibinfo{person}{A.~W. Strzebo{\'n}ski}.}
  \bibinfo{year}{2006}\natexlab{}.
\newblock \showarticletitle{Cylindrical Algebraic Decomposition Using Validated
  NumericS}.
\newblock \bibinfo{journal}{\emph{Journal of Symbolic Computation}}
  \bibinfo{volume}{41}, \bibinfo{number}{9} (\bibinfo{year}{2006}),
  \bibinfo{pages}{1021--1038}.
\newblock


\bibitem[\protect\citeauthoryear{Strzebo{\'n}ski}{Strzebo{\'n}ski}{2016}]%
        {strzebonski2016cylindrical}
\bibfield{author}{\bibinfo{person}{A.~W. Strzebo{\'n}ski}.}
  \bibinfo{year}{2016}\natexlab{}.
\newblock \showarticletitle{Cylindrical Algebraic Decomposition Using Local
  Projections}.
\newblock \bibinfo{journal}{\emph{Journal of Symbolic Computation}}
  \bibinfo{volume}{76} (\bibinfo{year}{2016}), \bibinfo{pages}{36--64}.
\newblock


\bibitem[\protect\citeauthoryear{Yang, Hou, and Xia}{Yang
  et~al\mbox{.}}{2001}]%
        {yang2001complete}
\bibfield{author}{\bibinfo{person}{L. Yang}, \bibinfo{person}{X. Hou}, {and}
  \bibinfo{person}{B. Xia}.} \bibinfo{year}{2001}\natexlab{}.
\newblock \showarticletitle{A Complete Algorithm for Automated Discovering of a
  Class of Inequality-Type Theorems}.
\newblock \bibinfo{journal}{\emph{Science in China Series F Information
  Sciences}} \bibinfo{volume}{44}, \bibinfo{number}{1} (\bibinfo{year}{2001}),
  \bibinfo{pages}{33--49}.
\newblock


\end{thebibliography}
\end{document}